\documentclass[conference]{IEEEtran}
\IEEEoverridecommandlockouts

\usepackage[T1]{fontenc}
\usepackage[utf8]{inputenc}

\usepackage{amsmath,amssymb}

\usepackage{graphicx}
\usepackage{xcolor}
\usepackage{tikz}
\usetikzlibrary{arrows.meta,positioning}

\usepackage{url}
\usepackage{xspace}
\usepackage{threeparttable}

\usepackage{tcolorbox}

\usepackage{pifont}
\usepackage{wasysym}      

\usepackage{booktabs}
\usepackage{multirow}
\usepackage{tabularx}
\usepackage{array}
\usepackage{colortbl}
\usepackage{makecell}

\usepackage{graphicx}
\usepackage[caption=false,font=footnotesize]{subfig}

\usepackage[section]{placeins}

\makeatletter
\DeclareRobustCommand\onedot{\futurelet\@let@token\@onedot}
\def\@onedot{\ifx\@let@token.\else.\null\fi\xspace}

\def\etal{\emph{et al}\onedot}
\makeatother

\newcommand{\keypoint}[1]{\smallskip\noindent\textbf{#1}\quad}

\newtheorem{definition}{Definition}
\newtheorem{proposition}{Proposition}

\newtheorem{corollary}{Corollary}
\newtheorem{gaps}{Gap}
\newtheorem{insights}{Insight}

\tcbset{
  colback=black!2,
  colframe=black!45
}

\newcommand{\gapbox}[1]{%
  \begin{tcolorbox}[colback=black!2,colframe=black!45]
    \begin{gaps}
      \normalfont #1
    \end{gaps}
  \end{tcolorbox}
}

\newcommand{\insightbox}[1]{%
  \begin{tcolorbox}[colback=blue!2,colframe=blue!35!black]
    \begin{insights}
      \normalfont #1
    \end{insights}
  \end{tcolorbox}
}

\definecolor{coverageFull}{HTML}{4F8582} %
\definecolor{coveragePartial}{HTML}{8A769D} 
\definecolor{coverageNone}{HTML}{D47A20} 
\newcommand{\fullc}{\textcolor{coverageFull}{\ding{108}}}   
\newcommand{\halfc}{\textcolor{coveragePartial}{\LEFTcircle}} 
\newcommand{\emptyc}{\textcolor{coverageNone}{\ding{109}}} 
\newcommand{\outc}{{\textnormal{-}}}

\newcommand{\Cfull}{\fullc}
\newcommand{\Cpart}{\halfc}
\newcommand{\Cnone}{\emptyc}

\usepackage{acro}
\DeclareAcronym{ai}{short = AI, long = Artificial Intelligence}

\DeclareAcronym{llm}{short = LLM, long = Large Language Model}

\newcommand{\LLMs}{\acp{llm}\xspace}
\DeclareAcronym{dae}{short = DAE, long = Decentralized Agentic Economy, long-plural-form = Decentralized Agentic Economies}
\newcommand{\DAE}{\ac{dae}\xspace}
\newcommand{\DAEs}{\acp{dae}\xspace}
\DeclareAcronym{an}{short = AN, long = Agentic Node}

\DeclareAcronym{dao}{short = DAO, long = Decentralized Autonomous Organization}

\DeclareAcronym{gev}{short = GEV, long = Generative Extractable Value}
\newcommand{\GEV}{\ac{gev}\xspace}

\newcolumntype{C}[1]{>{\centering\arraybackslash}m{#1}}
\newcommand{\rot}[1]{\rotatebox[origin=c]{90}{\scriptsize #1}}

\usepackage{hyperref}
\usepackage{cite}
\usepackage{enumitem}

\begin{document}

\bstctlcite{IEEEexample:BSTcontrol}

\author{
\IEEEauthorblockN{
Rui Sun\textsuperscript{1},
Xihan Xiong\textsuperscript{2},
Qin Wang\textsuperscript{3},
Fei Gao\textsuperscript{4},\\
Zelin Li\textsuperscript{5},
Zehua Cheng\textsuperscript{6},
Jiahao Sun\textsuperscript{7},
Zhipeng Wang\textsuperscript{8}}
\IEEEauthorblockA{
\textsuperscript{1}Newcastle University, UK \quad
\textsuperscript{2}University of Bristol, UK \quad
\textsuperscript{3}CSIRO, Australia\\
\textsuperscript{4}University College London, UK \quad
\textsuperscript{5}Ohio State University, USA \quad
\textsuperscript{6}University of Oxford, UK\\
\textsuperscript{7}FLock.io, UK \quad
\textsuperscript{8}The University of Manchester, UK}
}

\title{SoK: Decentralized Agent Economic Infrastructure}

\maketitle

\begin{abstract}

Decentralized agent economies increasingly build a single task from protocols that were designed and secured separately. This creates a simple problem: a workflow can look correct at each step and still produce the wrong outcome. For example, a correct escrow may release payment on an authorized approval that provides little evidence that the delivered work actually satisfied the task.


We systematize this problem across the full lifecycle of an agent task. Our study organizes security and economic requirements into $17$ property families over six stages, with receipt soundness and completeness assessed separately. We examine $12$ systems and standards, five reusable mechanism families, and four classical baselines. We introduce \textit{guarantee closure}, a task-relative criterion for determining whether guarantees established at one stage remain available and constrain the later decisions that depend on them.


We apply the criterion to controlled and native workflows, covering 840 matched executions and an exhaustive 11,648-case check over a finite objective-task domain. Our results expose recurring failures between verification and settlement, where conforming work can remain unaccepted or valid evidence can be ignored. Public records and model judgments further distinguish recorded approval from evidence of task conformance, while economic analysis identifies the report, penalty, and shared-error assumptions behind these guarantees. These findings show where end-to-end guarantees fail and what must be repaired to preserve them across the workflow.
\end{abstract}

\smallskip
\begin{IEEEkeywords}
Large Language Models, AI Agents, Agentic Economy, Blockchain,
Verifiable Computation, Decentralization
\end{IEEEkeywords}
\section{Introduction}
\label{sec:intro}

Decentralized agent architectures connect independent agents for discovery and collaboration~\cite{tomasev2025virtual, sun2025vision,yang2026agentnet}.    Commerce protocols add agreements, evaluation, and payment for their services~\cite{virtuals-acp,crapis2026erc8183,x402,ap2,mpp-docs,okx-app}. We study \DAEs, ecosystems in which agents transact through interoperable cryptographic and economic mechanisms without a single platform controlling the full task workflow.

Consider a simple escrow-based task. An agent seeking a service, the \emph{requester}, discovers candidate agents through a registry and selects a \emph{provider} to perform the task. They agree on the requirements and price, and the requester then locks funds in escrow. The provider submits a deliverable, an \emph{evaluator} assesses it against the agreed requirements, and the escrow releases payment on approval~\cite{erc8004,virtuals-acp,crapis2026erc8183}.


Each lifecycle step may work correctly on its own: the registry authenticates identities, the escrow enforces the agreed price, and the evaluator returns a verdict. However, mechanism-level correctness does not ensure that the task meets its security and economic requirements. Registration can confirm who the provider is, but not necessarily whether it can perform the claimed task~\cite{xiong2026erc8004,mafrur2026erc8004}. Likewise, an authorized evaluator may approve a result without providing enough evidence that it meets the agreed requirements~\cite{erc8004,crapis2026erc8183}. Even when such evidence is available, the payment contract may not use it. As a result, a correct escrow can still release funds based on an approval that is only weakly tied to the actual quality of the delivered work. The problem can arise even earlier: counterparty-controlled language may influence a purchasing decision before the agent authorizes the transaction~\cite{allouah2026agentic}.


These examples point to a common issue in DAE ecosystems: a guarantee established at one stage may be missing, weakened, or ignored when a later decision depends on it. Existing work has studied several parts of this problem. Prior systematizations examine service delivery, cross-stage consistency, negotiation, and shared evaluator failures~\cite{zhang2026a2apay,mao2026sok}, while protocol and deployment studies identify specific failure modes in payment, verification, and agent infrastructure~\cite{allegrini2025formal,li2026x402attacks,ling2026freeriding,zamyatin2021sok,xiong2026erc8004,wang2026http402}. What is still missing is a task-level view of how these guarantees compose across the full workflow, and which downstream decisions must preserve and use them.


We address this gap by organizing security and economic requirements across six lifecycle stages and introducing \emph{guarantee closure}. The criterion asks whether each required guarantee is established under stated assumptions and remains available to every later decision that depends on it. This lets us distinguish missing guarantees from cases where valid evidence exists but is not used, as well as cases where a correct acceptance rule exists but cannot be reached in practice.

We evaluate this view through controlled and native workflows, public records, model judgments, and economic analysis. The workflow experiments expose composition failures and test targeted repairs. The evidence analysis separates recorded approval from evidence of task conformance, while the economic analysis identifies the report, penalty, and error assumptions needed for the corresponding guarantees.

The paper makes four contributions:
\begin{itemize}[leftmargin = *]

\item \textbf{A lifecycle requirement and mechanism map.} We define 17 stage-indexed security and economic properties, including separate notions of receipt soundness and completeness, and assess their support across $12$ systems and standards, five reusable mechanism families, and four classical baselines (\S\ref{sec:requirements} and \S\ref{sec:mechanisms}).


\item \textbf{A task-level composition criterion and evaluation.} We define \emph{guarantee closure} to test whether task-required guarantees are established and preserved across dependent downstream decisions. We apply it to two architectures across four task classes, using controlled perturbations, targeted repairs, and native implementation comparisons to expose missing guarantees, dropped cross-stage guarantees, and unavailable acceptance paths (\S\ref{sec:closure}).


\item \textbf{Evidence and economic explanations for closure failures.} We study what settlement records can and cannot establish about task conformance, trace fixed evaluator judgments through to payment outcomes, and analyze how persuasion-sensitive reports, corruption incentives, and correlated evaluator errors limit the guarantees that these workflows can support (\S\ref{sec:measurement} and \S\ref{sec:econ}).


\item \textbf{A gap-derived research agenda.} From the identified closure failures, we derive requirements for capability attestation, persuasion-robust interfaces, receipt-consuming settlement, and evaluator markets with explicit evidence, incentive, and error assumptions (\S\ref{sec:agenda}).

\end{itemize}

\section{Preliminaries and System Model}
\label{sec:preliminaries}

\subsection{Definitions}




A \emph{Decentralized Agentic Economy (\DAE)} connects autonomous participants through cryptographic and economic mechanisms without a single platform controlling the full task workflow. An \emph{\ac{an}} has a persistent identity and controls resources such as models, data, tools, memory, or compute. It can request a task, provide a service, or evaluate another agent's work. Individual stages may be brokered while still using open payment and settlement protocols.

A \emph{mechanism} implements a lifecycle function, such as an auction, escrow, or receipt scheme. A \emph{guarantee} is a property provided by a mechanism under stated assumptions. A \emph{workflow} combines mechanisms across stages. We assess claims from systems, specifications, and reusable mechanism families against their supporting documentation.

A \emph{verification receipt} records an acceptance decision together with evidence bound to the task, deliverable, and acceptance condition. Settlement \emph{consumes} the receipt when it validates this evidence and makes payment conditional on the resulting acceptance decision. Recording a receipt without using it in the payment decision leaves the verification-to-settlement dependency unenforced.

\subsection{The Task Lifecycle}
\label{sec:lifecycle}

Figure~\ref{fig:lifecycle} organizes the task lifecycle into six stages. In \textbf{S1 discovery}, the requester identifies candidate providers and considers their identity, capability, and reputation. \textbf{S2 negotiation} establishes the task specification, allocation, price, and terms. The provider produces the deliverable in \textbf{S3 execution}. In \textbf{S4 verification}, a verifier or evaluator determines whether the deliverable satisfies the agreed requirements. \textbf{S5 settlement} transfers value under the release condition, while \textbf{S6 dispute} resolves contested outcomes and updates reputation for future discovery. Roles depend on the task, so the same agent may take different roles across workflows. 

The guarantees established at each stage must be preserved and used by later decisions that depend on them. Capability claims used in provider selection require supporting evidence, and verification must apply the acceptance rule agreed during negotiation. Settlement must then condition payment on the resulting acceptance evidence for the same task and output. Even when individual mechanisms operate correctly, a later decision may omit or ignore a required guarantee. Section~\ref{sec:closure} formalizes these dependencies through guarantee closure.

\subsection{Threat Model}
\label{sec:threatmodel}

A Byzantine requester, provider, evaluator, registry operator, or relayer may deviate arbitrarily, collude, create Sybil identities, or exploit ordering and information asymmetries. Honest agents follow their prescribed protocol. Their model-generated decisions remain susceptible to misleading inputs and dependent errors. We examine two corresponding conditions.

\keypoint{A1. Persuasion-sensitive reports.} 
We consider counterparty-controlled language that changes an honest agent's bid or evaluation report while task-relevant facts remain fixed. LLM negotiation, shopping, and prompt-injection studies motivate this report channel~\cite{bianchi2024negotiation,allouah2026agentic,greshake2023,zeng2024johnny}. Section~\ref{sec:econ} states the bounded-report model and its reachability assumptions. Human-directed persuasion is a related, separately measured phenomenon~\cite{salvi2025persuasion}.

\keypoint{A2. Correlated evaluator failures.}
Agents can share models, retrieval sources, prompts, or tools and therefore make dependent errors~\cite{mao2026sok}. Byzantine fault tolerance already permits coordinated faults up to a declared number or weight of faulty participants. Shared failures matter when they exceed that budget or violate the error assumptions used to justify a guarantee. Proposition~\ref{prop:correlated} examines what committee size can establish under a specified dependence model.

The analysis starts above consensus. Each workflow declares its assumptions about principal authorization, control-flow integrity, tool execution, signer custody, authenticated delivery, bridge trust, and finality. Model-level compromise is part of the environment in which the infrastructure is assessed. Within that declared environment, A1 changes the reports generated by an otherwise fixed policy, while A2 changes the dependence between evaluator errors.

\section{Stage-Indexed Security and Economic Requirements}
\label{sec:requirements}

\begin{figure*}[t]
\centering
\includegraphics[width=\textwidth]{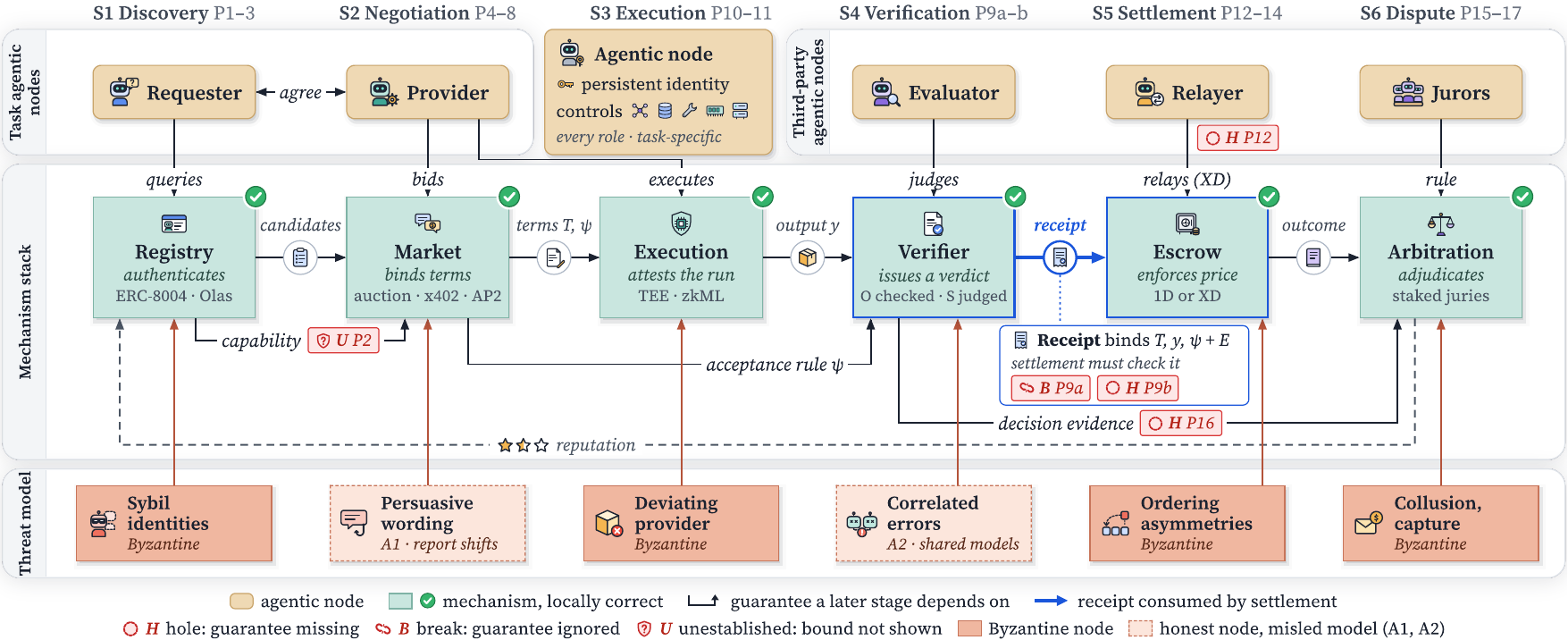}
\caption{\textbf{\DAE Task lifecycle}: agent roles, mechanisms, security and economic requirements, and threats across six stages. Negotiation fixes the acceptance predicate, verification produces a receipt binding the task, output, predicate, and supporting evidence, and settlement must validate this receipt before releasing payment. The annotated gaps illustrate how locally correct mechanisms can leave required guarantees missing, ignored, or unestablished.}
\label{fig:lifecycle}
\label{fig:sp_overview}
\end{figure*}
We classify tasks along two dimensions. \textit{Objective tasks} (O) have a checkable acceptance predicate, while \textit{subjective tasks} (S) require judgement. Verification and payment may occur within one security domain or across separate domains with distinct state, finality, or trust assumptions. These dimensions define four task classes, with 1D denoting single-domain and XD denoting cross-domain settlement.

Table~\ref{tab:props} assigns requirements $R(S_i)$ to each stage $S_i$, each with a distinct security or economic loss. Core requirements cover participation, payment, and accountability, while conditional requirements apply to features such as capability claims, confidentiality, or usage-based payment. Appendix~\ref{app:formal} formalizes the central criteria and the aggregation of closure bounds.

\subsection{Rationale for Stage-Indexed Requirements}
\label{sec:req-exec}

\keypoint{S1 Discovery.}
A requester needs to know both who the provider is and what it can do. Sybil resistance (P1) bounds influence gained through additional identities~\cite{douceur2002sybil}. Capability soundness (P2) connects advertised performance to the deployed agent on a reference task distribution. P9a instead concerns soundness for one deliverable. Reputation unforgeability and portability (P3) requires feedback whose authors and interactions remain verifiable across discovery contexts.

\keypoint{S2 Negotiation.}
An agreed price does not by itself make participation worthwhile or truthful reporting optimal. Incentive compatibility (P4) makes truthful reporting optimal under the declared solution concept~\cite{myerson1983}. Individual rationality (P5) requires expected utility, conditional on participant type, to meet the outside option. It includes execution costs, penalties, and nonpayment under the declared outcome distribution. Weak budget balance (P6) requires expected incoming transfers to cover outgoing transfers. Collusion resistance (P7) bounds additional coalition utility from coordinated deviation. Persuasion robustness (P8) instead requires fact-equivalent counterparty messages to preserve the report, or its conditional distribution for randomized policies. P8 protects report generation before commitment.

\begin{table}[t]
\centering
\begin{threeparttable}
\caption{Stage-indexed requirements by task class.}
\label{tab:props}
\scriptsize
\setlength{\tabcolsep}{3.2pt}
\renewcommand{\arraystretch}{1.02}

\begin{tabular}{@{}clp{3.8cm}cccc@{}}
\toprule
\textbf{Stage} & \textbf{P} & \textbf{Property}
& \textbf{O,1D} & \textbf{O,XD} & \textbf{S,1D} & \textbf{S,XD} \\
\midrule

\multirow{3}{*}{S1}
& P1 & Sybil resistance
& C & C & C & C \\
& P2 & Capability soundness
& C & C & C & C \\
& P3 & Unforgeable, portable reputation
& C & C & C & C \\

\cmidrule(lr){1-7}
\multirow{5}{*}{S2}
& P4 & Incentive compatibility
& C & C & C & C \\
& P5 & Individual rationality
& \fullc & \fullc & \fullc & \fullc \\
& P6 & Weak budget balance
& \fullc & \fullc & \fullc & \fullc \\
& P7 & Collusion resistance
& C & C & C & C \\
& P8 & Persuasion robustness
& C & C & C & C \\

\cmidrule(lr){1-7}
\multirow{2}{*}{S3}
& P10 & Task-scoped confidentiality
& C & C & C & C \\
& P11 & Metering integrity
& C & C & C & C \\

\cmidrule(lr){1-7}
\multirow{2}{*}{S4}
& P9a & Receipt soundness
& \fullc & \fullc & \fullc & \fullc \\
& P9b & Receipt completeness
& \fullc & \fullc & \fullc & \fullc \\

\cmidrule(lr){1-7}
\multirow{3}{*}{S5}
& P12 & Cross-domain conditional atomicity
& \emptyc & \fullc & \emptyc & \fullc \\
& P13 & Timeliness and progress
& \fullc & \fullc & \fullc & \fullc \\
& P14 & Ordering fairness / EV resistance
& C & C & C & C \\

\cmidrule(lr){1-7}
\multirow{3}{*}{S6}
& P15 & Accountability / non-repudiation
& \fullc & \fullc & \fullc & \fullc \\
& P16 & Evidence verifiability
& \fullc & \fullc & \fullc & \fullc \\
& P17 & Capture resistance
& C & C & C & C \\
\bottomrule
\end{tabular}

\begin{tablenotes}[flushleft]
\scriptsize
\item[-] \textbf{Key.} \fullc: adopted core requirement; C: conditional; \emptyc: not required by task class alone. O/S: objective/subjective; 1D/XD: single-/cross-domain settlement. 
\end{tablenotes}
\end{threeparttable}
\end{table}

\keypoint{S3 Execution.}
Execution can expose private inputs and create incentives to misreport resource use. Task-scoped confidentiality (P10) limits disclosure of requester inputs and provider resources across execution, verification, and metering to what the task authorizes. When payment depends on usage, metering integrity (P11) bounds discrepancies between recorded and actual resource use.

\keypoint{S4 Verification.}
Verification needs both a bound on accepting nonconforming work and a feasible acceptance path for conforming work. Receipt soundness (P9a) bounds acceptance of nonconforming work by a declared error bound or attack-cost floor under the predicate and trust model. Receipt completeness (P9b) gives honest providers with conforming outputs an acceptance path within the task's time and cost bounds. A sound verifier may still be unavailable or unaffordable. Subjective soundness additionally depends on evaluator selection, incentives, and error.

\keypoint{S5 Settlement.}
A correct verification result helps only if it controls payment for the same task and output. Across domains, conditional atomicity (P12) requires matching finalized acceptance, progress under delivery/finality bounds, and consistent expiry/refund rules~\cite{zamyatin2021sok}. Timeliness and progress (P13) requires completion within the horizon and, where needed, settlement of verifiable partial results. Ordering fairness and extractable-value resistance (P14) bounds gains from controlling ordering, messages, or context~\cite{daian2020flash,kelkar2020order,li2025transaction}. P8 instead asks whether that control changes a report.

\keypoint{S6 Dispute and Governance.}
A dispute needs evidence of who acted and how the decision was reached. Accountability and non-repudiation (P15) attributes consequential actions to identities using admissible evidence that resists denial. Evidence verifiability (P16) protects provenance and permits replay of deterministic steps. Subjective judgments also require evidence about the work. Capture resistance (P17) prevents coalitions within the stated resource bound from durably controlling adjudication or rules. Properties can recur after their primary stage. For example, P7 also applies to coalition-sensitive evaluator voting.

\insightbox{\textbf{Acceptance guarantees are not interchangeable.} Replaying deterministic decision steps (P16) does not by itself establish soundness (P9a). Soundness alone does not ensure that honest providers with conforming outputs obtain acceptance within the task's time and cost bounds (P9b). Settlement must condition payment on validated, task-bound acceptance, not merely record a verdict.}

\label{sec:taskclass}
The concrete workflow activates the conditional entries in Table~\ref{tab:props}. For an objective computation at a fixed price, P9a and P9b require a sound acceptance decision and a feasible route to obtain it. Charging by resource use additionally activates P11. Moving payment to another domain activates P12 without changing the acceptance rule. Selecting the provider from a claimed benchmark activates P2 before execution, even when the final output can be checked independently. This task-specific requirement set governs both the mechanism assessment and the closure test.

\section{Mechanisms and Provided Guarantees}
\label{sec:mechanisms}

We compare $12$ systems and standards, five reusable mechanism families, and four classical baselines in Table~\ref{tab:property-matrix}. We select this corpus to cover public designs for open agent tasks, including the A402 research prototype. We exclude wallet-only rails and closed platforms. We provide the inclusion criteria, scoring rules, and evidence behind the core acceptance ratings in Appendix~\ref{app:corpus}.

Each assessment identifies a subject, interface, and acceptance condition. Its evidence may come from a specification, implementation, observation, or proposal. Full support meets the criterion under stated cryptographic and base-ledger primitives. Restricted support covers a subclaim or requires additional trust, resource, or timing conditions. It can satisfy a task under admissible assumptions. \emptyc\ denotes an applicable but unestablished requirement. {\outc{} denotes inapplicability.} Family rows summarize candidate guarantees from cited members. A workflow must check that the selected members and their assumptions are jointly compatible.

An interface's role is not the same as a task's needs. AP2, x402, and MPP authorize payment without establishing work conformance, so P9a is unsupported. They supply no acceptance service, making P9b inapplicable. Conversely, a ZK proof can establish P9a for a sorting relation without supplying the complete decision history needed for P16.

ACP and ERC-8183 both receive unsupported P9b because their evaluator-controlled acceptance lacks an established bounded alternative during refusal. Restricted support requires an identified progress mechanism with explicit conditions. Table~\ref{tab:acceptance-evidence} records the $12$ systems' P9a, P9b, and P16 evidence.

Table~\ref{tab:support-summary} separates system and family counts, describing support within the selected corpus. Full P9a is relation-specific, whereas full P16 covers the task's decision evidence. Accepting a hardware or challenger assumption may make restricted support sufficient without changing the strict classification.

\begin{table}[t]
\centering
\begin{threeparttable}
\caption{Support counts by comparison unit.}
\label{tab:support-summary}
\setlength{\tabcolsep}{3pt}
\begin{tabular}{>{\centering\arraybackslash}p{3cm}>{\centering\arraybackslash}p{1cm}rrrr}
\toprule
\textbf{Comparison unit} & \multicolumn{1}{c}{\textbf{Property}} & F & R & U & {\outc{}} \\
\midrule
Systems/standards (12) & P9a & 0 & 7 & 5 & 0 \\
Systems/standards (12) & P9b & 0 & {4} & {4} & 4 \\
Systems/standards (12) & P16 & 0 & 8 & 4 & 0 \\
Mechanism families (5) & P9a & 1 & 3 & 1 & 0 \\
Mechanism families (5) & P9b & 0 & 4 & 0 & 1 \\
Mechanism families (5) & P16 & 0 & 5 & 0 & 0 \\
\bottomrule
\end{tabular}
\begin{tablenotes}[flushleft]
\scriptsize
\item[-] F/R/U: full/restricted/unsupported; \outc: inapplicable. Counts cover the selected corpus. P9a/P9b denote receipt soundness/completeness; P16 denotes evidence verifiability.
\end{tablenotes}
\end{threeparttable}
\end{table}

\begin{table*}[htbp]
\centering
\begin{threeparttable}
\caption{Mechanism support for the lifecycle properties.}
\label{tab:property-matrix}
\scriptsize
\setlength{\tabcolsep}{1.45pt}
\renewcommand{\arraystretch}{1.02}
\newcommand{\propertyrot}[1]{%
  \rotatebox[origin=c]{90}{\makebox[1.65cm][l]{\scriptsize\strut #1}}%
}
\setlength{\bigstrutjot}{\dimexpr1.65cm-\baselineskip+\aboverulesep+\belowrulesep+\cmidrulewidth\relax}
\newcommand{\matrixhead}[1]{\multirow{2}[1]{*}[10pt]{#1}}
\begin{tabular*}{\textwidth}{@{\extracolsep{\fill}}>{\centering\arraybackslash}p{0.8cm} >{\raggedleft\arraybackslash}p{4.0cm} >{\raggedleft\arraybackslash}p{2.7cm}
*{3}{C{0.46cm}} *{5}{C{0.46cm}} *{4}{C{0.46cm}} *{3}{C{0.46cm}} *{3}{C{0.46cm}}@{}}
\toprule
\multicolumn{1}{c}{\matrixhead{\rot{\textbf{Group}}}}
& \multicolumn{1}{c}{\matrixhead{\textbf{Artifact / Mechanism family}}}
& \multicolumn{1}{c}{\matrixhead{\textbf{Class}}}
& \propertyrot{P1 Sybil} & \propertyrot{P2 Capability} & \propertyrot{P3 Reputation}
& \propertyrot{P4 IC} & \propertyrot{P5 IR} & \propertyrot{P6 BB} & \propertyrot{P7 Collusion} & \propertyrot{P8 Persuasion}
& \propertyrot{P9a Sound} & \propertyrot{P9b Complete} & \propertyrot{P10 Confid.} & \propertyrot{P11 Metering}
& \propertyrot{P12 Atomicity} & \propertyrot{P13 Timeliness} & \propertyrot{P14 Ordering}
& \propertyrot{P15 Account.} & \propertyrot{P16 Evidence} & \propertyrot{P17 Capture} \\
\cmidrule(lr){4-6}\cmidrule(lr){7-11}\cmidrule(lr){12-15}\cmidrule(lr){16-18}\cmidrule(lr){19-21}
& & & \multicolumn{3}{c}{\textbf{S1 Discovery}}
& \multicolumn{5}{c}{\textbf{S2 Negotiation}}
& \multicolumn{4}{c}{\textbf{S3 and S4 Exec./Verify}}
& \multicolumn{3}{c}{\textbf{S5 Settlement}}
& \multicolumn{3}{c}{\textbf{S6 Dispute}} \\
\midrule
\multirow{12}{*}[-1.5\baselineskip]{\rotatebox[origin=c]{90}{\textit{Systems and standards (12)}}}
& ERC-8004 draft \cite{erc8004} & Identity/reputation registry
& \emptyc & \emptyc & \halfc & \outc & \outc & \outc & \emptyc & \outc & \halfc & \outc & \outc & \outc & \outc & \outc & \outc & \halfc & \halfc & \emptyc \\
& ACP / Virtuals \cite{virtuals-acp} & Escrow + evaluator
& \emptyc & \emptyc & \halfc & \emptyc & \halfc & \fullc & \emptyc & \emptyc & \halfc & {\emptyc} & \emptyc & \outc & \outc & \halfc & \emptyc & \halfc & \halfc & \emptyc \\
& ERC-8183 draft \cite{crapis2026erc8183} & Job state machine
& \outc & \outc & \halfc & \emptyc & \halfc & \fullc & \emptyc & \emptyc & \halfc & \emptyc & \halfc & \outc & \outc & \halfc & \halfc & \halfc & \halfc & \emptyc \\
& AP2 v0.2 \cite{ap2} & Verifiable payment credentials
& \outc & \outc & \outc & \outc & \halfc & \fullc & \emptyc & \emptyc & \emptyc & \outc & \halfc & \outc & \outc & \halfc & \halfc & \halfc & \halfc & \outc \\
& x402 v2 \cite{x402,x402-spec-v2,x402-bazaar,x402-offer-receipt} & Web posted-price payments
& \outc & \emptyc & \outc & \outc & \halfc & \fullc & \emptyc & \emptyc & \emptyc & \outc & \outc & \halfc & \outc & \halfc & \halfc & \halfc & \halfc & \outc \\
& A402 prototype \cite{li2026a402} & Atomic service channels
& \emptyc & \halfc & \outc & \outc & \halfc & \halfc & \emptyc & \emptyc & \halfc & \halfc & \halfc & \outc & \halfc & \halfc & \emptyc & \halfc & \halfc & \outc \\
& MPP \cite{mpp-docs,httpauth-payment} & Payment authorization
& \outc & \emptyc & \outc & \outc & \halfc & \fullc & \emptyc & \emptyc & \emptyc & \outc & \halfc & \halfc & \outc & \halfc & \halfc & \halfc & \emptyc & \outc \\
& OKX APP proposal \cite{okx-app,okx-app-whitepaper} & Brokered payment intents
& \outc & \outc & \outc & \emptyc & \halfc & \fullc & \emptyc & \emptyc & \halfc & \halfc & \emptyc & \halfc & \outc & \halfc & \halfc & \halfc & \halfc & \emptyc \\
& Bittensor \cite{bittensor2021,bittensor-docs} & Staked subjective scoring
& \halfc & \emptyc & \halfc & \emptyc & \halfc & \emptyc & \emptyc & \emptyc & \halfc & \halfc & \emptyc & \emptyc & \outc & \halfc & \halfc & \halfc & \emptyc & \emptyc \\
& Autonolas / Olas \cite{autonolas,olas-mech} & Bonded service registry
& \halfc & \emptyc & \halfc & \outc & \halfc & \halfc & \emptyc & \emptyc & \halfc & \halfc & \emptyc & \halfc & \outc & \halfc & \emptyc & \halfc & \halfc & \emptyc \\
& Fetch.ai \cite{fetchai,fetchai-almanac,uagents} & Agent marketplace
& \halfc & \emptyc & \emptyc & \outc & \emptyc & \fullc & \emptyc & \emptyc & \emptyc & \emptyc & \emptyc & \emptyc & \outc & \halfc & \emptyc & \halfc & \emptyc & \emptyc \\
& SingularityNET \cite{singularitynet2019} & Registry + payment channels
& \emptyc & \emptyc & \emptyc & \outc & \halfc & \fullc & \emptyc & \emptyc & \emptyc & \emptyc & \emptyc & \halfc & \outc & \halfc & \halfc & \halfc & \emptyc & \emptyc \\
\midrule
\multirow{4}{*}[-1.5\baselineskip]{\rotatebox[origin=c]{90}{\shortstack{\textit{Classical}\\\textit{baselines}\\\textit{(4)}}}}
& Contract Net \cite{smith1980contractnet} & Task announcement protocol
& \outc & \emptyc & \outc & \emptyc & \emptyc & \emptyc & \emptyc & \emptyc & \outc & \outc & \outc & \outc & \outc & \halfc & \outc & \emptyc & \emptyc & \outc \\
& False-name-proof auctions \cite{yokoo2001aij,yokoo2004geb} & Sybil-robust mechanism
& \fullc & \outc & \outc & \fullc & \fullc & \fullc & \halfc & \emptyc & \outc & \outc & \outc & \outc & \outc & \outc & \outc & \outc & \outc & \outc \\
& Faithful distributed implementation \cite{shneidman2004faithful,feigenbaum2002bgp} & Mechanism + node incentives
& \outc & \outc & \outc & \fullc & \halfc & \halfc & \halfc & \emptyc & \halfc & \outc & \outc & \outc & \outc & \outc & \outc & \halfc & \halfc & \outc \\
& Computational trust / reputation \cite{huynh2006fire,teacy2006travos,kamvar2003eigentrust,sabater2001regret,cheng2005sybilproof} & Witness-based reputation
& \halfc & \halfc & \halfc & \outc & \outc & \outc & \emptyc & \outc & \outc & \outc & \outc & \outc & \outc & \outc & \outc & \outc & \outc & \outc \\
\midrule
\multirow{5}{*}[-0.5\baselineskip]{\rotatebox[origin=c]{90}{\shortstack{\textit{Reusable}\\\textit{mechanism}\\\textit{families (5)}}}}
& zkML receipts \cite{kang2022zkml,sun2024zkllm,ezkl,qu2025zkgpt} & ZK verification
& \outc & \halfc & \outc & \outc & \outc & \outc & \outc & \outc & \fullc & \halfc & \fullc & \emptyc & \outc & \halfc & \outc & \halfc & \halfc & \outc \\
& TEE-based inference \cite{costan2016sgx,zhu2024hoppercc,knauth2018ratls} & Hardware attestation
& \outc & \halfc & \outc & \outc & \outc & \outc & \outc & \outc & \halfc & \halfc & \halfc & \emptyc & \outc & \halfc & \outc & \halfc & \halfc & \outc \\
& Optimistic receipts \cite{kalodner2018arbitrum,teutsch2017truebit,conway2024opml,lee2026challenge} & Fraud proofs
& \outc & \emptyc & \outc & \outc & \halfc & \outc & \halfc & \outc & \halfc & \halfc & \halfc & \emptyc & \outc & \halfc & \halfc & \halfc & \halfc & \outc \\
& Intent-based cross-domain settlement \cite{across-intent-lifecycle,erc7683,zamyatin2021sok,across-spokepool,umip-157,umip-179,across-root-bundles} & Solver fill and repayment
& \outc & \outc & \outc & \outc & \outc & \outc & \outc & \outc & \emptyc & \outc & \emptyc & \outc & \emptyc & \halfc & \halfc & \halfc & \halfc & \outc \\
& Staked arbitration \cite{kleros2019,uma-oracle,realityeth} & Juried / bonded disputes
& \fullc & \outc & \outc & \outc & \halfc & \halfc & \emptyc & \emptyc & \halfc & \halfc & \emptyc & \outc & \outc & \halfc & \outc & \halfc & \halfc & \halfc \\
\bottomrule
\end{tabular*}
\begin{tablenotes}[flushleft]
\scriptsize
\item[-] \textbf{Symbols.} \fullc\ denotes full support, \halfc\ restricted support, \emptyc\ unsupported guarantees, and \outc\ inapplicable properties. Strict-support counts include only full support.
\item[-] \textbf{Our assessment.} The matrix includes twelve systems and standards, five reusable mechanism families, and four classical baselines. Family rows summarize claims from the cited members and require compatible assumptions and interfaces when mechanisms are combined. Appendix~\ref{app:corpus} gives the assessment criteria.
\item[-] \textbf{Abbreviations.} ACP denotes the Agent Commerce Protocol. AP2 and OKX APP denote the respective Agent Payments Protocols. MPP denotes the Machine Payments Protocol. ERC denotes Ethereum Request for Comments. IC, IR, and BB denote incentive compatibility, individual rationality, and budget balance.
\end{tablenotes}
\end{threeparttable}
\end{table*}

\subsection{Identity, Attestation, and Reputation}
\label{sec:mech-identity}

Registries and credentials bind identities to keys or endpoints, but identity binding alone does not limit the influence gained from creating additional identities~\cite{w3c-did,w3c-vc,erc8004}. Kleros addresses this problem through stake-weighted juror selection, so splitting a fixed stake across multiple identities does not increase its aggregate selection weight~\cite{kleros2019}. Bittensor also weights influence by stake, but stake-ranked permits restrict participation. Its P1 support therefore covers only the admitted validator set~\cite{bittensor-validating}.

Capability evidence must connect evaluation to the serving configuration. {Attested evaluation} already binds {evaluation results} to online model responses under TEE assumptions~\cite{schnabl2025attestable}. Attestation and weight commitments support related bindings~\cite{costan2016sgx,knauth2018ratls,sun2024zkllm}. Full P2 additionally requires the claimed statistical performance, complete agent configuration, and validity interval to match the deployed service. These bindings define the task-specific capability claim to be established.

Reputation adds an interaction-level requirement. ERC-8004 authenticates feedback authors and permits record reuse, but evidence linking feedback to completed work is optional, giving restricted P3~\cite{erc8004}. FIRE authenticates certified references under PKI and signature assumptions, while their interaction claims still rely on honest reporting. Rating correctness is a separate P9a obligation~\cite{huynh2006fire}. TRAVOS supplies statistical reliability estimates under stationary behavior, supporting a P2 subclaim without binding those estimates to the serving configuration~\cite{teacy2006travos}.

\subsection{Negotiation and Market Mechanisms}

\label{sec:mech-market}

Payment authorization protects agreed terms after a decision has been made, but it does not protect the process that produced that decision. x402 and the Machine Payments Protocol (MPP) authorize payments under committed terms, while signed agreements and intents bind negotiated terms~\cite{x402,mpp-docs,httpauth-payment,virtuals-acp,ap2,erc7521}. P8 concerns the preceding report-generation process. A persuasive listing can change a report before commitment, so authorization and signature checks do not establish P8. Purchasing and price-negotiation experiments show sensitivity to listings and anchors~\cite{allouah2026agentic,takenami2025anchoring}. Prompt injection changes ranking in an AP2 shopping prototype~\cite{debi2026ap2redteam}. Auction experiments separately show how reasoning prompts and strategy advice change bids~\cite{shah2025auction}.

Classical mechanisms show what stronger incentive guarantees require. Leveled Division Set (LDS) auctions provide P1 and P4 within a fixed division and reservation-price model. The Vickrey--Clarke--Groves (VCG) result requires both true and admissible reported types to lie in a valuation domain with a concave surplus function. Faithful distributed implementation provides P4 at an ex post Nash equilibrium. Their restricted P7 results cover one economic participant, including its use of multiple identities or deviations in reporting and computation. Under the stated assumptions, that participant cannot increase its utility through the covered deviations. These results do not extend to collusion among distinct participants without an additional coalition bound~\cite{yokoo2001aij,yokoo2004geb,shneidman2004faithful}. Appendix~\ref{app:rationale} gives the corresponding conditions and bounds.

Current agent-market mechanisms provide weaker guarantees at this interface. Bittensor bounds the reward share retained by honest stake under its stated stake and scoring conditions, but it does not establish truthful reporting as an optimal strategy or an additional bound on coalition utility~\cite{bittensor-docs}. Signatures and multisigs similarly constrain authorization without bounding gains from coordinated deviation. P7 therefore requires a coalition-utility bound, while P8 requires control over the report-generation process before authorization.

\subsection{Verifiable Execution}

\label{sec:mech-verify}

Verification mechanisms differ in both what they establish and what they assume. Zero-knowledge (ZK) receipts provide soundness for a specified relation. Trusted execution environments (TEEs) attest execution under a measured configuration and hardware root of trust. Optimistic receipts instead rely on challengers who can contest invalid claims~\cite{groth2016,sun2024zkllm,conway2024opml,costan2016sgx}. These mechanisms can support P9a, but proof-generation cost, attestation availability, and challenge windows also determine whether an honest provider can obtain acceptance within the task's time and cost bounds under P9b and P13~\cite{qu2025zkgpt,vanschaik2024sgxfail}.

The acceptance claim is only as strong as the statement being verified. Proof of execution establishes task conformance only when the proved relation encodes the agreed acceptance condition. A TEE receipt must likewise bind the delivered result to the measured configuration, while optimistic acceptance retains its challenger and timing assumptions. Settlement must therefore validate the acceptance claim together with its task binding and supporting assumptions.

These execution guarantees do not by themselves establish P11. The cited zkML and TEE mechanisms provide computation proofs or configuration attestation. Benchmark timings do not authenticate task-bound consumption~\cite{sun2024zkllm,knauth2018ratls,zhu2024hoppercc}. Bittensor's utility scores likewise do not supply such measurements~\cite{bittensor2021,bittensor-docs}. P11 remains unsupported for these assessed sources. A usage-based workflow needs separate evidence binding measured consumption to its task and interval.

Subjective work shifts this dependency from a technical verifier to an evaluator. The Agent Commerce Protocol (ACP) uses evaluator judgment, and ERC-8183 permits a client-selected, unbonded evaluator~\cite{virtuals-acp,crapis2026erc8183}. Hence, their P9a support depends on the evaluator and the evidence supporting its judgment. P9b introduces a separate requirement. As ERC-8183 refunds the client after expiry, a conforming provider depends on an evaluator response to obtain acceptance and payment. Table~\ref{tab:verif-families} (Appendix~\ref{app:corpus}) compares these verification paths, and \S\ref{sec:econ} analyzes evaluator incentives and error dependence.

A402 couples execution, payment, and result delivery through TEE-assisted adaptor signatures~\cite{li2026a402}. This is an existing integrated binding mechanism. Its guarantee depends on trusted execution and the chosen service logic. Establishing semantic task acceptance and bounded completion requires corresponding acceptance and progress guarantees.

\subsection{Settlement and Cross-Domain Coordination}

\label{sec:mech-settle}

Settlement inherits the condition on which payment is released. An escrow can therefore execute correctly while paying against a condition that does not establish task acceptance~\cite{virtuals-acp,crapis2026erc8183}. The problem becomes more explicit across security domains, where finalized acceptance in one domain must remain bound to payment in another. The selected Across integration validates destination funds delivery before origin repayment, while ERC-7683 assigns settlement security to its implementing protocol~\cite{across-intent-lifecycle,erc7683}. Funds delivery alone does not establish task acceptance.

The assessed Across integration supports application-level acceptance checks, which our funds-delivery baseline does not exercise~\cite{across-spokepool,umip-179}. P12 requires payment to consume finalized acceptance of the same task and result, with coordinated expiry and refund conditions. Appendix~\ref{app:rationale} summarizes the mechanism-specific assumptions. Verification-conditioned payment provides related constructions~\cite{xu2021htlc,campanelli2017zkcp,fuchsbauer2019wi}.

Ordering introduces a separate settlement risk. Maximal extractable value (MEV) research measures ordering-based extraction~\cite{daian2020flash,li2025transaction} and develops ordering defenses under declared network assumptions~\cite{kelkar2020order,heimbach2022sok}. The optimistic family supplies a restricted P14 subclaim. When all valid challenges are included and priority fees are zero, equal rewards do not change under permutations of that fixed set~\cite{lee2026challenge}. This does not cover exclusion, variable fees, or message manipulation. Agent markets additionally expose the messages and context used before a transaction is formed. Control over these inputs can change the economic decision itself. \S\ref{sec:econ} formalizes these additional P14 channels and their relation to P8.

\subsection{Dispute Resolution and Governance}

\label{sec:mech-govern}

Dispute mechanisms require both attribution and evidence about the disputed decision. Staked juries and optimistic oracles provide adjudication under participation and capture assumptions~\cite{kleros2019,uma-oracle,realityeth}. Job records, signed authorizations, and activity records identify who performed parts of the lifecycle~\cite{crapis2026erc8183,ap2,bittensor-docs,autonolas}. Full P15 extends this attribution to every consequential lifecycle action.

P16 adds a stronger requirement. Reconstructing a decision requires integrity-protected inputs, procedure versions, and the evidence used to reach the outcome, so task and deliverable references provide only part of the required record. Governance then adds P17, whose strength depends on stake concentration, arbitration thresholds, appeal rules, and final authority.

\gapbox{\textbf{Local guarantees do not establish an end-to-end acceptance chain.} A component can satisfy its local specification while the task still lacks an adequate acceptance chain. Downstream decisions must preserve the task binding of upstream guarantees and use them under compatible assumptions.}

\section{End-to-End Guarantee Closure}
\label{sec:closure}

A sorting task pays for an ordered output containing exactly the elements of a committed input. Even a correct verifier cannot protect payment released on authorization alone. Conversely, requiring a receipt is insufficient if only an unavailable evaluator can produce it, leaving conforming work unpaid.

Removing verification and all equivalent acceptance paths creates a missing guarantee, or \emph{hole}. Retaining adequate evidence but ignoring it at payment creates a \emph{break}. An unavailable sole evaluator creates a P9b hole because the provider has no way to obtain the required receipt.

Guarantee closure checks provision and preservation for every active task requirement. The definition records the task bindings, evidence, assumptions, validity intervals, and bounds needed to justify dependent decisions.

\begin{definition}[Guarantee Closure]
\label{def:closure}
{For a task class and its declared path conditions, let $\mathcal P=\{P1,\ldots,P8,P9a,P9b,P10,\ldots,P17\}$ and $R(S_i)\subseteq\mathcal P$ denote the requirements at stage $S_i$. For each $P\in R(S_i)$, let $Q_{P,i}(\theta)$ determine whether the claimed bound meets the task's threshold. Qualitative claims must meet their stated assessment criterion. The task fixes these predicates, permitted participant actions, admissible adversarial histories, and a dependency relation $\mathcal D_T$ before mechanisms are inspected. An edge $(P,i,j)\in\mathcal D_T$ requires the decision at $S_j$ to preserve the claim supplied for $P$ at $S_i$.}

To track whether the same guarantee is preserved across stage boundaries, we denote each guarantee as a typed record
\[
G=(P,s,T,y,\psi,E,A,I,\theta,\mathsf{prod}),
\]
where $P$ is the property, $s$ the subject, $T$ the task or state digest, $y$ the result or deliverable digest, $\psi$ the acceptance predicate, $E$ the evidence type and payload, $A$ the trust assumptions, $I$ the validity interval, $\theta$ the quantitative bound, and $\mathsf{prod}$ the producer. The bound $\theta$ carries its guarantee kind and units, such as error probability, economic cost, or time. Evidence may be a runtime record or a static certificate for a mechanism property. A schema may use $y=\bot$, denoting an absent result, before a result exists or when it is irrelevant to the claim. A later result-dependent claim must bind the actual result before consumption. A mechanism provides $P$ at $S_i$ when a valid $G$ is available with $Q_{P,i}(\theta)=1$ under $A$.

{A path assigns mechanisms or external processes to stages under environmental assumptions $\Gamma$. Each assumption must be evidenced or explicitly included in $\Gamma$, jointly valid without circular justification, and valid conditional on admitted adaptive histories. A decision consumes $G$ when its validated claim constrains the decision. Runtime consumption authenticates producer and evidence, matches the required fields among $P,s,T,y,\psi$, checks $I$ and $Q_{P,i}(\theta)$, and checks available evidence for $A$ and compatibility with $\Gamma$. Static consumption requires a transition argument preserving the bound under the mechanism, task, and history assumptions.}

A path preserves a required property when it is provided under admissible assumptions and every edge $(P,i,j)\in\mathcal D_T$ is realized by consuming the corresponding guarantee at $S_j$. Revalidation at $S_j$ may supply equivalent evidence if it establishes the same bound, bindings, and validity under admissible assumptions. At the terminal stage S6, preservation reduces to provision because there is no downstream stage. A path achieves guarantee closure relative to $\Gamma$ for the declared task profile when every property in $R(S_i)$ is preserved, with its specified bound, at each required stage.

\end{definition}

A closure claim is relative to the task's thresholds and a shared execution. When several probability bounds contribute to the claim, their sum must fit the task's error budget, including the applicable histories for adaptive calls. Time and economic bounds retain their respective units. Appendix~\ref{app:formal} gives the probability aggregation rule.

{An obligation is \emph{unestablished} when its required bound lacks support and no failure has been established under the same conditions. A \emph{hole} identifies an established absence of a source guarantee or recovery path. A \emph{break} identifies an established upstream guarantee discarded by a dependent decision. Counterexamples retain the task, assumptions, and strategy quantified by the claim. In particular, failure of one invocation policy can refute that policy's progress guarantee, while a strategy-based availability claim requires analysis of all permitted provider alternatives. Closure requires every active obligation to be established.}

\keypoint{Closure assessment.}
Fix task predicates, thresholds, allowed histories, and dependencies before assessing mechanisms. Record each guarantee's producer, evidence, assumptions, and consumer checks. Runtime claims require authenticated records, while static utility and availability claims require transition arguments over admitted histories. These obligations distinguish missing guarantees, discarded evidence, and unproved bounds. Appendix~\ref{app:closure-ledger} gives the assessment.

\subsection{Closure on a Reference Task}

With ideal commitment and authentication, the reference sorting task prefunds 10 payment units. The provider's total cost satisfies $c_{\rm P}\leq10$, the requester's net value satisfies $v_{\rm R}\geq10$, and both outside options are zero. Hence
\[
u_{\rm P}=10-c_{\rm P}\geq0,
\qquad
u_{\rm R}=v_{\rm R}-10\geq0,
\]
establishing P5. Payment cannot exceed the prefunded amount (P6). Under the declared execution and delivery conditions, a sound, callable verifier accepts by time 12, and payment completes by time 16, satisfying P9a, P9b, and P13. Signed actions and replayable records establish terminal P15 and P16.

{The reference includes an objective acceptance service that returns matching evidence within the declared response and cost bounds. Together with honest delivery, this service contract supplies the transition argument for the seven obligations. The controlled evaluation admits evaluator silence and malformed evidence, and assesses its policies under that broader outcome space. Appendix~\ref{app:reference-contract} states the reference assumptions and Table~\ref{tab:closure-ledger} gives the dependencies.}

\subsection{Closure Evaluation.}
\label{sec:closure-method}

{The evaluation uses matched tasks, predicates, and schedules to distinguish interface failures and repair effects. Native implementations measure local transfers and application outcomes. The unit of comparison is the same condition under different policies. Section~\ref{sec:measurement} separately examines the information supplied by retained task records.}

\keypoint{Controlled workflows.}
Two architectures combine objective or subjective verification with single-domain or cross-domain payment. All use known counterparties and a public fixed price. The eight configurations yield 68 property assessments. In the evaluated subjective profiles, capability-based allocation and coalition-sensitive evaluation additionally activate P2 and P7. Cross-domain payment activates P12, requiring payment to remain tied to task acceptance.

Payment-first releases on payment authorization, whereas verification-first also requires a valid evaluator receipt within one domain. Both cross-domain baselines check funds delivery rather than task acceptance. Four repairs target this path. C checks the complete receipt at payment. A checks it at source and issues a task-bound payment authorization. V makes objective verification provider-callable. E records decision inputs and procedure for replay. C and A share the same authenticated channel, source-finality assumption, and delivery bound. Appendix~\ref{app:repair-design} specifies their checks.

Crossing $15$ conditions with seven policies and eight configurations gives $840$ executions for matched comparisons. Sorting provides the objective acceptance predicate. Subjective simulations vary assigned verdicts against fixed conformance labels, isolating evidence handling rather than model quality. Appendix~\ref{app:closure-eval} gives the controls and boundary tests.

\keypoint{Native implementations.}
We execute the x402 reference implementation and unchanged ERC-8183 escrow on local Ethereum Virtual Machines (EVMs)~\cite{x402,crapis2026erc8183}, measuring transfers, application effects, and final balances. Three payment policies are compared across eleven x402 conditions, and three evaluator policies across ten ERC-8183 conditions for each of four SHA-256 document tasks. Evaluator extensions follow the reference escrow's submission and terminal-state semantics. Table~\ref{tab:native-results} lists the policies, and Appendix~\ref{app:experiments} gives the implementation settings and transaction ordering.

\subsection{Closure Findings}
\label{sec:closure-procedure}
\label{sec:closure-findings}

\begin{table}[!]
\centering
\begin{threeparttable}
\caption{Selected baseline closure outcomes.}
\label{tab:closure}
\scriptsize
\setlength{\tabcolsep}{2.6pt}
\renewcommand{\arraystretch}{1.08}
\begin{tabular*}{\columnwidth}{@{\extracolsep{\fill}}ccp{4.5cm}@{}}
\toprule
\textbf{Architecture} & \textbf{Task class} & \multicolumn{1}{c}{\textbf{Closure status}} \\
\midrule
Payment-first
& O,1D
& $H_{16}$, $B_{9a}^{4\rightarrow5}$, {$U_{9b}$} \\

Payment-first
& O,XD
& $H_{16}$, $B_{9a}^{4\rightarrow5}$, {$U_{9b}$}, $H_{12}^{\psi}$ \\

Payment-first
& S,1D
& $H_{9b}$, $H_{16}$, $U_2$, $U_7$, $U_{9a}$ \\

Payment-first
& S,XD
& $H_{9b}$, $H_{16}$, $H_{12}^{\psi}$, $U_2$, $U_7$, $U_{9a}$ \\

\addlinespace[2pt]

Verification-first
& O,1D
& $U_{9b}$, $H_{16}$, {$H_{13}$} \\

Verification-first
& O,XD
& $B_{9a}^{4\rightarrow5}$, $U_{9b}$, $H_{16}$, $H_{12}^{\psi}$ \\

Verification-first
& S,1D
& $H_{9b}$, $H_{16}$, {$H_{13}$}, $U_2$, $U_7$, $U_{9a}$ \\

Verification-first
& S,XD
& $H_{9b}$, $H_{16}$, $H_{12}^{\psi}$, $U_2$, $U_7$, $U_{9a}$ \\
\bottomrule
\end{tabular*}
\begin{tablenotes}[flushleft]
\scriptsize
\item[-] \textbf{Notation.} $H$: hole; $B$: break; $U$: unestablished obligation. Subscripts identify properties; $i\!\rightarrow\!j$ marks a guarantee not consumed from stage $i$ to $j$. $\psi$ marks task-acceptance binding for P12. Task classes follow Table~\ref{tab:props}.
\end{tablenotes}
\end{threeparttable}
\end{table}

Table~\ref{tab:closure} distinguishes the baseline closure failures. For objective single-domain tasks, payment-first has a P9a break because valid acceptance evidence does not control settlement. In subjective tasks, both architectures have a P9b hole because only the designated evaluator can issue acceptance and may remain silent. Objective P9b remains unestablished. Cross-domain settlement adds a P12 hole because the baseline adapter checks funds delivery without matching task acceptance. The evidence schemas also leave a P16 hole in every configuration, while subjective configurations additionally leave P2, P7, and P9a unestablished. Removing all acceptance paths creates P9a and P9b holes; removing only the payment check creates a P9a break while preserving P9b.

{C and A make settlement consume task-bound acceptance and prevent the same nine unbound payments in each objective cross-domain configuration. V recovers acceptance after silence or contestable rejection, but its invocation rule leaves five of eleven conforming objective cases without usable acceptance. E makes intact decision records replayable, while altered records require a recovery path beyond integrity checking. These repairs address distinct dependencies without establishing closure. Of 68 property assessments, 16 are preserved, 40 remain unestablished, and twelve have holes, comprising eight P13 holes and four subjective P9b holes.}

Recovery must cover every unusable receipt, including one carrying approval. With source acceptance at time $r$ and delivery delay $d$, the schedule requires $r+d\leq12$ and $r+d+1\leq16$. Admitting $d=10$ requires recovery by $r=2$, within the evaluation step. Later recovery requires later receipt and payment deadlines. P9b additionally requires permitted provider retries and their total cost to fit the task budget.

Finite-input enumeration covers all 364 arrays of length zero through five over $\{-1,0,1\}$, giving 11,648 additional objective executions. The outcome pattern remains unchanged for each fixed configuration and perturbation. 
Appendix~\ref{app:additional-outcomes} reports the acceptance misses that remain after repair, while Appendices~\ref{app:closure-sensitivity} and~\ref{app:finite-stress} report the sensitivity and robustness checks.

\subsection{Native Workflow Evaluation}
\label{sec:whynot}
\label{sec:closure-results}


\begin{table}[b]
\centering
\begin{threeparttable}
\caption{Native x402 and ERC-8183 outcomes.}
\label{tab:native-results}
\scriptsize
\setlength{\tabcolsep}{3pt}
\renewcommand{\arraystretch}{1.1}
\begin{tabularx}{\columnwidth}{@{}>{\centering\arraybackslash}p{0.32\columnwidth} *{3}{>{\centering\arraybackslash}X}@{}}
\toprule
\textbf{x402 configuration} &
\textbf{Unpaid effect} &
\textbf{Wrong resource} &
\textbf{Paid, no effect} \\
\cmidrule{2-4}
Default & $2/2$ & $1/1$ & $0/2$ \\
Upfront & $0/2$ & $1/1$ & $2/2$ \\
Upfront and binding & $0/2$ & $0/1$ & $2/2$ \\
\midrule
\textbf{ERC-8183 evaluator} &
\textbf{Nonconf. paid} &
\textbf{Conf. paid} &
\textbf{Late paid} \\
\cmidrule{2-4}
Address-controlled & $12/12$ & $8/16$ & $8/12$ \\
Objective validation & $0/12$ & $8/16$ & $0/12$ \\
Provider-callable & $0/12$ & $16/16$ & $0/12$ \\
\bottomrule
\end{tabularx}
\begin{tablenotes}[flushleft]
\scriptsize
\item[-] Wrong resource: resource substitution; Nonconf./Conf.: nonconforming/conforming.  Setup and case definitions: Appendix~\ref{app:experiments}.
\end{tablenotes}
\end{threeparttable}
\end{table}

\keypoint{Payment Order and Service Delivery.}
The x402 cases in Table~\ref{tab:native-results} distinguish payment order, request binding, and service delivery. Two injected settlement races leave a completed request recorded by the service despite failed payment and a suppressed response. Upfront payment prevents unpaid effects but can leave payment without successful service delivery. The tested retry does not restore service. A payer-signed request commitment separately prevents substitution between equally priced resources. Payment order therefore governs whether an effect can remain unpaid, while request binding governs which resource a payment authorizes. The v2 specification distinguishes authorization, upfront, and escrow flows~\cite{x402-spec-v2}. The three policies test authorization, upfront payment, and an extension that binds requests to payments.

\keypoint{Acceptance Correctness and Availability.}
{The ERC-8183 cases separate predicate correctness from invocation access. Before expiry, address-controlled evaluation pays all $12$ nonconforming submissions and eight of $16$ conforming ones (Table~\ref{tab:native-results}). Objective validation prevents all $12$ invalid payments and removes the operator's authority to reject conforming work directly. Opening the same verifier to provider invocation permits all $16$ conforming submissions to be paid. These cases establish a working acceptance path under the tested balances, local transaction inclusion, and pre-expiry scheduling. A complete P9b claim additionally relates the full task cost and worst-case response time to the task's bounds.}

{The native escrow permits completion after expiry while funds remain, and both evaluator extensions require completion before expiry. Terminal rejection or refund closes the payment path. Provider invocation therefore depends on timely inclusion before either terminal event. 
Appendix~\ref{app:native-costs} reports both completion/refund orders and the invocation gas costs.
}

\keypoint{Agent Execution Through Settlement.}
Agent execution introduces a submission interface between correct computation and settlement. We test local Qwen3.5 4B and Gemma 4 E4B together with six hosted models. Each model uses local tools for eight document-hashing tasks. Appendix~\ref{app:agent-execution} specifies the individual and paired inputs. All 64 attempts obtain correct tool results, but only 55 final answers satisfy the strict submission format, as reported in Table~\ref{tab:agent-results}. A normalization rule fixed after the local strict results and before hosted inference extracts one digest object without changing its value. It recovers eight local Qwen3.5 4B submissions and one hosted Qwen3.6 Plus submission. Reusing the same answers across two interfaces, four conditions, and two acceptance policies gives 1,024 native workflows.

After normalization, address-controlled approval pays all 64 cases with altered input evidence and all 64 cases with a replaced output digest. Provider-callable objective verification rejects both substitution groups and pays all 64 conforming outputs during operator silence. Correct computation and a usable submission therefore do not establish task binding or access to acceptance. Appendix~\ref{app:agent-execution} gives the pairing, parsing rules, and full results.

\insightbox{\textbf{Correct verification and usable acceptance are separate requirements.} Task and output checks prevent unrelated evidence from authorizing payment, while a callable path determines whether conforming work can obtain acceptance before the task deadline. Closure requires both the correct predicate and a usable path to its decision.}

\section{Deployment Evidence and Evaluator Decisions}
\label{sec:measurement}


\ifdefined\empiricalAbox\else
  \newsavebox{\empiricalAbox}
  \newsavebox{\empiricalBbox}
  \newsavebox{\empiricalCbox}
  \newsavebox{\empiricalDbox}
  \newlength{\empiricalPanelUnit}
\fi
\begin{figure*}[t]
\centering
\setlength{\fboxrule}{0.6pt}
\setlength{\fboxsep}{1.8pt}
\setlength{\empiricalPanelUnit}{%
  \dimexpr(\textwidth-12pt-6\fboxsep-6\fboxrule)*100/47852\relax}
\captionsetup[subfloat]{position=bottom,justification=centering,
  singlelinecheck=false,labelfont=md,textfont=md,farskip=2pt,nearskip=4pt}
\setcounter{subfigure}{0}
\sbox{\empiricalAbox}{%
  \subfloat[Feedback evidence\label{fig:empirical-feedback}]{%
    \fbox{\includegraphics[width=166\empiricalPanelUnit]{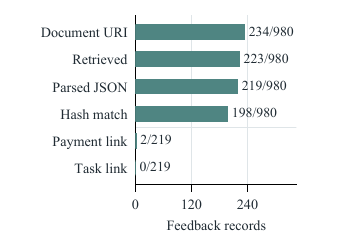}}}}
\setcounter{subfigure}{3}
\sbox{\empiricalDbox}{%
  \subfloat[Recovered settlement decisions\label{fig:empirical-settlement}]{%
    \fbox{\includegraphics[width=166\empiricalPanelUnit]{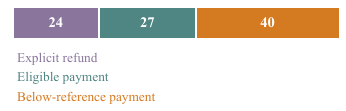}}}}
\setcounter{subfigure}{1}
\sbox{\empiricalBbox}{%
  \subfloat[Reference disagreement\label{fig:empirical-disagreement}]{%
    \fbox{\includegraphics[width=231\empiricalPanelUnit]{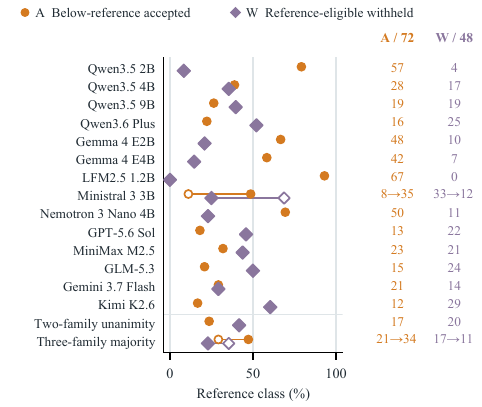}}}}
\setcounter{subfigure}{2}
\sbox{\empiricalCbox}{%
  \subfloat[Unavailable\label{fig:empirical-unavailable}]{%
    \fbox{\includegraphics[width=81.52\empiricalPanelUnit]{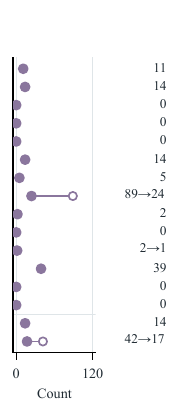}}}}
\noindent\hbox to \textwidth{%
\raisebox{\dp\empiricalDbox}{%
\vbox to \dimexpr\ht\empiricalBbox+\dp\empiricalBbox-\dp\empiricalDbox\relax{%
  \kern2pt
  \hbox{\usebox{\empiricalAbox}}%
  \vfil
  \hbox{\usebox{\empiricalDbox}}}}%
\hfill
\raisebox{\dp\empiricalBbox}{\usebox{\empiricalBbox}}%
\hfill
\raisebox{\dp\empiricalCbox}{\usebox{\empiricalCbox}}%
}
\par\nointerlineskip
\setcounter{subfigure}{3}
\caption{\textbf{Empirical assessment of agentic commerce.}
}
\label{fig:empirical-results}
\end{figure*}

Recorded approval and evidence of task conformance are distinct assessment targets. We examine what public records establish, whether model judgments are both usable and consistent with an independent reference, how committee rules affect those judgments, and how the resulting decisions determine settlement.

\keypoint{Public Records and Task Binding.}
Among the first 10,000 ERC-8004 Ethereum identities, 980 feedback records for 628 identities match native events~\cite{liu2026agentdataset}. We retrieve 223 contents, parse 219, and verify 198 content commitments under the specified byte-decoding rules. The 219 documents cover $12$ agents and come from 178 issuers. None comes from the issuer responsible for 645 events in the full cohort. Four explicitly concern endpoint health checks and two contain structured payment references. The remaining records do not establish a task purpose from these fields alone. No parsed document supplies an identifier in the searched task/job fields. Figure~\ref{fig:empirical-results}(a) therefore measures available linkage evidence. 
Appendix~\ref{app:public_feedback_data} gives the commitment-matching and task-link details.

Two commitment-matching documents link one identity's feedback issuer to confirmed USDC payments to the Base x402 wallet declared in its registration metadata. Neither contains an output commitment, and one incorrectly names Ethereum. The commitment verifies the retrieved content but not the claims within it. Establishing the payment relationship requires the correct chain, while linking payment to completed work further requires task and output bindings.

Thirty-six local cases use the reference ERC-8004 registries~\cite{erc8004implementation}. In eight linkage cases, the registry accepts both the consistent record and seven records with missing or inconsistent external links. An application-level check accepts only the consistent case. The registry enforces its storage and authorization rules, while the application establishes the links to work and payment. Appendix~\ref{app:native-erc8004} describes the cases.

\keypoint{Reference Agreement and Decision Availability.}
Eight local models and six hosted models evaluate existing answers to the same 120 distinct HelpSteer2 requests~\cite{wang2024helpsteer2}. The sample contains 24 answers at each human helpfulness score from zero through four. Scores of at least three define 48 reference-eligible answers, leaving 72 below reference. Models receive the same rubric without human scores, and a usable model score of at least three counts as approval. Normalization extracts one unambiguous score object without changing its value. The rule is developed on local outputs and fixed before hosted inference. Figure~\ref{fig:empirical-results}(b)(c) reports the individual evaluators and two local committees. 
Appendix~\ref{app:evaluator-decisions} gives the sampling rule, generation settings, and repeat-call observations.

Helpfulness at cutoff three supplies an operational reference for this balanced five-stratum sample. The resulting rates characterize that mixture. A contractual P9a assessment additionally requires a task acceptance predicate and adjudication of reference uncertainty, including the score-two/three boundary.

Individual evaluators differ in both below-reference approval and eligible withholding. Qwen3.5 2B, 4B, and 9B approve 57, 28, and 19 of the 72 below-reference answers while withholding 4, 17, and 19 of the 48 reference-eligible answers. Hosted Qwen3.6 Plus approves 16 below-reference answers and withholds 25 eligible answers, with all 120 scores usable. Across these Qwen observations, fewer below-reference approvals accompany more eligible withholdings. The hosted comparison changes model generation and serving configuration along with model size.

Decision availability is distinct from reference agreement. GLM-5.3 has 39 unavailable decisions: 38 exhaust the shared 1,024-token reasoning and output budget, and one has an invalid format. These observations reflect the tested serving configuration. Normalization recovers 65 Ministral scores, including 21 reference-eligible and 27 below-reference approvals. Holding outputs fixed isolates the parsing effect. The rule was developed on local outputs and fixed before hosted calls, with both collections using the same requests.

\keypoint{Committee Composition and Voting Rules.}
The primary unanimity pair uses Qwen3.5 9B and Gemma 4 E4B, requiring usable votes from both and approval by both. A missing vote leaves the decision pending. The pair retains $17$ of Qwen's $19$ below-reference approvals because the other two lack a usable Gemma vote. Among $62$ below-reference responses with usable votes from both models, Gemma opposes none of Qwen's 17 approvals. The drop from $19$ to $17$ therefore reflects decision availability rather than opposing judgments.

The primary strict-majority committee adds Ministral 3 3B. Across the fixed local collection and requests, normalized majority committees of sizes one, three, five, and seven have median below-reference approval counts between 44 and 45 out of 72, with a narrower range for larger committees. Five local models individually approve more than half of the below-reference answers. Committee outcomes can reflect both individual approval rates and dependence between votes. Separating these effects requires comparison with votes that retain the observed marginal approval rates while removing dependence.

Voting rules change this effect. Adding Ministral to the normalized Qwen3.5 9B and Gemma 4 E4B pair reduces below-reference approvals from 17 to 13 under unanimity while increasing eligible withholdings and unavailable decisions. Under strict majority, the same addition increases below-reference approvals from 17 to 34 while reducing both withholding and unavailability. A third usable vote can change both the direction and availability of the committee decision. 
Appendix~\ref{app:paired-diagnostics} reports the sweep over committee membership.

These paired results quantify membership, voting-rule, and availability effects on the 120 observed requests. The committees reuse individual votes and the settlement experiments reuse captured decisions. Proposition~\ref{prop:correlated} below states the common-error and dependence conditions under which a committee soundness bound can instead be derived.

\keypoint{From Evaluator Decisions to Settlement.}
We hold the captured judgments fixed and pass them through native escrow. Each of the $120$ requests is settled under $14$ individual evaluator rules and two local committee rules, yielding $1{,}920$ request-policy pairs per interface and $3{,}840$ settlements across the strict and normalized interfaces. Both interfaces reuse the same $1{,}680$ captured model outputs. Clients commit the observed answers, evaluator transactions carry approval or rejection, and unavailable decisions remain pending until expiry. Human ratings serve only as independent references.

Strict decisions produce $952$ payments, $736$ explicit refunds, and $232$ expiry refunds. Under normalization, the totals become 1,019 payments, 760 explicit refunds, and $141$ expiry refunds. Normalization therefore converts $91$ expiry refunds into $67$ payments and $24$ explicit refunds. The additional payments include $27$ reference-eligible and forty below-reference answers. 
Figure~\ref{fig:empirical-results}(d) shows these settlement changes.

All $3{,}840$ settlements follow the supplied decisions. Explicit rejection produces an immediate refund, while an unavailable decision remains pending until expiry. Making an existing judgment usable can therefore change settlement without changing the recorded judgment. Agreement with the independent reference remains a separate property.

\keypoint{Transcript Evidence and Evaluator Correctness.}
ERC-8183 permits a deliverable reference without requiring content addressing~\cite{crapis2026erc8183}. The same recorded reference and approval can therefore accompany conforming content in one protocol-valid execution and nonconforming content in another. The ledger records the decision and its issuer, but these fields alone cannot distinguish the two cases.

\begin{proposition}[Transcript non-identification]
\label{prop:ledger-nonid}
Let $\mathcal T$ be a settlement transcript and let $Y=1$ denote task conformance, with $Y=0$ otherwise. If the schema admits two protocol-valid worlds with the same approved transcript $\mathcal T$ but different $Y$, the transcript cannot uniquely determine whether the approval was wrongful.
\end{proposition}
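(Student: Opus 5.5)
The statement is essentially an identifiability argument, so I would prove it by contradiction using the two worlds supplied by the hypothesis. First I would make precise what ``the transcript uniquely determines whether the approval was wrongful'' means. It means there is a function $f$ on transcripts such that, for every protocol-valid world $w$ with approved transcript $\mathcal T(w)$, we have $f(\mathcal T(w)) = \mathbf 1[\text{approval in } w \text{ is wrongful}]$.

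Next I would identify wrongfulness of an approval with the approval being granted while $Y=0$. On the event that the transcript records an approval, this gives
\[
\mathbf 1[\text{wrongful}] = 1 - Y .
\]
This step fixes the reading of ``wrongful'' that the paper uses, namely approval of nonconforming work, as in P9a.

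Now take the two protocol-valid worlds $w_0$ and $w_1$ given by the hypothesis, with $\mathcal T(w_0)=\mathcal T(w_1)=\mathcal T$, $Y(w_0)=0$ and $Y(w_1)=1$. Both transcripts record approval, so the required identity gives
\[
f(\mathcal T)=1 \text{ from } w_0, \qquad f(\mathcal T)=0 \text{ from } w_1 .
\]
This is a contradiction, so no such $f$ exists. I would add a remark that the same argument rules out any decision rule depending on the ledger only through $\mathcal T$. Such a rule must err in at least one of the two worlds, and under any prior giving both worlds positive weight its error probability is bounded below by the smaller of those weights.

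To connect the statement to ERC-8183, I would exhibit the two worlds concretely. Keep the same job, the same deliverable reference string (which is not content-addressed), the same evaluator address and the same approve transaction. Let the off-chain content behind the reference be conforming in one world and nonconforming in the other. Both executions satisfy the state-machine rules, since the escrow never inspects content.

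The main obstacle is not the logic, which is a one-line pigeonhole argument, but checking that the two executions really are protocol-valid and yield byte-identical transcripts. This includes event fields, issuer and timing. I would argue it by noting that every transcript field is a function of on-chain inputs that can be held fixed across the two worlds, while $Y$ depends only on off-chain content that the schema does not bind.
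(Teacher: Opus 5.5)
Your proposal is correct and matches the paper's argument. The paper likewise notes that ERC-8183's non-content-addressed deliverable reference lets the same approved transcript accompany both conforming and nonconforming content, so any rule that depends only on $\mathcal T$ misclassifies wrongfulness in one of the two worlds. Your flagged ``main obstacle'' of checking byte-identical, protocol-valid transcripts belongs only to the ERC-8183 instantiation, since the proposition itself assumes those two worlds exist.
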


A content-addressed proof checked against a declared predicate rules out this construction for that predicate. The evidence retained in the transcript therefore determines which claims about evaluator correctness can be established.

\gapbox{\textbf{Recorded approval does not establish task conformance.} The same approved transcript can accompany conforming and nonconforming work when the recorded evidence does not bind the decision to task conformance. Evaluator correctness therefore requires task-bound conformance evidence beyond the approval record itself.}

\section{Economic Guarantee Analysis}
\label{sec:econ}

Three dependencies govern the economic analysis. Language determines reports before allocation, enforceable penalties determine the cost of purchased votes, and shared errors determine the value of additional evaluators. We state the assumptions and conclusions here. 
Appendix~\ref{app:econ-full} places each model beside its result and proof.

\subsection{Report Formation and Incentives}

For a fixed report policy, write $b_i=v_i+\delta_i(m)$ with $|\delta_i(m)|\leq\delta$. Admissible messages share the task-fact projection $\phi(m)=\phi(m_0)$, and the truthful reference has $\delta_i(m_0)=0$. LLM negotiation experiments motivate report sensitivity~\cite{bianchi2024negotiation}. The bound and which shifts are reachable are modeling assumptions. Truthful reporting remains dominant in the classical auction game, while A1 concerns how a fixed policy forms the submitted bid.

For a second-price auction without reserves, report caps, or withdrawal, Proposition~\ref{prop:ic} gives welfare loss at most $2\delta$. The worst-case construction assumes suitable valuations and reachability of every shift in $[-\delta,\delta]^n$. If a seller can induce a common upward shift of $\delta$, allocation is unchanged and revenue rises by $\delta$ without knowledge of the value profile. This construction can violate ex-post individual rationality when the price increase exceeds the winner's value advantage. Appendix~\ref{app:auction-context} relates these bounded-report consequences to uncertain valuations and auctioneer deviation.

Definition~\ref{def:gev} measures generative extractable value (GEV) as the increase in maximal expected gross value over an honest, fact-only baseline when nested control over ordering, messages, and context becomes available. With finite suprema, the increments are nonnegative. The common-shift example gives total $\mathrm{GEV}\geq\delta$. It bounds the marginal message increment only if the ordering premium is zero. This separates report manipulation from conventional ordering extraction~\cite{daian2020flash,wang2022exploring}.

Corollary~\ref{cor:restore} supplies a sufficient P8 condition. A precommitted deterministic policy $b_i=f_i(v_i,\tau)$ uses a private value fixed independently of the message and canonical attested context $\tau$ identical for fact-equivalent messages. This prevents message-induced shifts, while allowing strategic baseline policies. Truthful policy choice remains a separate incentive claim. This condition requires a validated mapping from fact-equivalent messages to the same context. Separating instructions from data offers relevant design principles~\cite{debenedetti2025camel,chen2025struq}.

\subsection{Evaluator Incentives and Shared Errors}

An unbonded evaluator whose modeled exposure is future participation value $r$ can be compensated for false approval at cost no greater than $r$. For a fresh identity, this cost can remain small even when the escrow value is large. Proposition~\ref{prop:evaluator} instead considers $t$ required approvals, separately committed job bonds $b$, and a uniform per-false-vote probability $q>0$ of enforced net loss $b$ within the job horizon, regardless of quorum formation. Penalties leave the attacking coalition. Honest evaluators reject nonconforming work in this attack class, false votes bring no extra protocol reward or saving, and bribed evaluators maximize expected monetary payoff. For wrongful release $W$ and a purchased-seat attack $A$,
\[
\mathbb E[\operatorname{cost}(A)]\geq tqb\,\Pr[W\mid A].
\]
Thus $b>g/(tq)$ deters direct attacks with gain $g$. The cost bound depends on funded adversary-owned seats and enforceable bribes, including under random seat selection. Subjective tasks may lack evidence for positive $q$. Concurrent exposure and compensation for mistaken penalties must also satisfy P5/P6. A feasible evaluator market must satisfy these deterrence, participation, and budget conditions jointly.

Buying votes and inducing common errors are different channels. Under a declared task/attack distribution with equal marginal error $0<e<1/2$ and pairwise correlation $\rho\in[0,1)$, Proposition~\ref{prop:correlated} obtains
\[
n_{\rm eff}=\frac{n}{1+(n-1)\rho},\qquad
\sigma^2=\frac{e(1-e)}{n_{\rm eff}}.
\]
Counting ties as error, the homogeneous-error specialization of Ladha's correlated-voting bound~\cite{ladha1992}, obtained by Cantelli's inequality, bounds majority error above by $\sigma^2/[\sigma^2+(1/2-e)^2]$. A constructed committee in this class has error at least $\rho e$ for every $n$. This construction limits what size alone can guarantee over the class. Survey-sampling design effects, ensemble diversity, and dependent voting supply the background~\cite{kish1965,breiman2001,kuncheva2003,boland1989,kaniovski2011}. Independent heterogeneous voting is treated separately~\cite{berend1998}. For P9a, $e$ and $\rho$ must hold conditional on nonconforming work. The variance identity and mixture construction are derived in Appendix~\ref{app:correlation-proof}.

Finally, let $C_{\rm dir}$ and $C_{\rm err}$ be infima of expected cost divided by success probability over pure purchased-seat and pure induced-error attacks. Corollary~\ref{cor:combined} gives $C_{\rm dir}\geq tqb$ and
\[
\min\{tqb,C_{\rm err}\}\leq C_{\rm pure}
=\min\{C_{\rm dir},C_{\rm err}\}\leq C_{\rm err}.
\]
Increasing collateral alone cannot certify a pure-channel attack cost above $C_{\rm err}$. Mixed attacks require a joint model. Report derivation, misconduct evidence, and the relevant error distribution must therefore accompany any economic guarantee.

\section{A Gap-Derived Research Agenda}
\label{sec:agenda}


The identified failures suggest three missing guarantees and three enabling interfaces for end-to-end guarantee closure.

\subsection{Holes and Missing Guarantees}

\keypoint{Sound capability attestation (P2):}
Capability evidence must bind the serving identity and full agent configuration to performance on a declared task class. Existing attested evaluation designs provide part of this chain~\cite{schnabl2025attestable}. Remaining gaps include statistical confidence, tools and memory beyond model weights, and validity under changing workloads.

\keypoint{Persuasion-robust market interfaces (P8):}
Corollary~\ref{cor:restore} fixes report derivation, canonical attested context, and the independently formed value. A usable interface must retain decision-relevant information while excluding irrelevant persuasive content, with its persuasion bound assessed for the deployed model and interface.

\keypoint{Cross-domain conditional atomicity (P12):}
Conditional-payment and integrated execution-payment designs provide starting points~\cite{campanelli2017zkcp,li2026a402}, subject to known attacks on earlier ZK contingent-payment repairs~\cite{fuchsbauer2019wi}. Across trust domains, the remaining obligation is to preserve the specified acceptance predicate, authentication, finality, expiry, and refunds. Existing coupling for a measured program does not alone establish subjective task conformance.

\subsection{Interfaces and Enabling Mechanisms}

\keypoint{Receipt-consuming settlement:}
When adequate verification is available, requiring payment to check it repairs the payment-first break for objective single-domain tasks. The check must validate the receipt's task, output, acceptance condition, and trust model. The upstream mechanism provides soundness, the acceptance service provides the receipt path, and settlement consumes the evidence.

\keypoint{Machine-checkable evidence (P16):}
A common evidence object should link the task, deliverable digest, procedure, receipt, and evaluator decision, with authenticated provenance and deterministic replay. ERC-8183 extension points can carry but do not define this object~\cite{crapis2026erc8183}. For subjective work, it must also record assessment evidence. P9a separately asks whether that evidence supports conformance.

\keypoint{Economically sound evaluator markets (P7$\rightarrow$P9a):}
Propositions~\ref{prop:evaluator},~\ref{prop:correlated} and Corollary~\ref{cor:combined} show why evaluator design must address capture cost, enforceable penalties, and shared errors. Job-specific misconduct evidence is required before a penalty probability can justify the incentive bound; increasing stake alone does not establish that probability.
P9b also requires a feasible acceptance path. Penalizing approval but not rejection may favor conservative refusal. Compensation, bounded-cost appeals, or alternative verification can support timely acceptance. Provider-callable verification must occur before terminal refund.

\section{Related Work}
\label{sec:related}

Zhang~\etal analyze intent binding, payment--service separation, and lifecycle consistency, identifying multi-round negotiation of payable terms as an open direction~\cite{zhang2026a2apay}. Mao~\etal examine evaluator refusal, collusion, shared model failures, and cross-layer propagation~\cite{mao2026sok}. Analyses of x402 examine payment-security conditions, discovery attacks, settlement inconsistencies, and implementation failures~\cite{li2026x402attacks,ling2026freeriding,wang2026http402}, with conditional security arguments for payment workflows and composed defenses~\cite{li2026x402attacks,ling2026freeriding}. Our analysis relates these findings to task-specific obligations, evidence consumption, and repair locations.

Xiong~\etal study task/payment linkage in public feedback~\cite{xiong2026erc8004}. We extend their analysis by checking feedback-content commitments, stratifying retrieved evidence by issuer, and testing application-level linkage checks. Table~\ref{tab:related} summarizes literature overlaps.

Assume-guarantee reasoning supports compositional verification under component assumptions~\cite{henzinger2002assume}, while universally composable security provides a simulation-based framework for security-preserving composition~\cite{canetti2001uc}. Allegrini~\etal formalize properties of agent orchestration and task lifecycles~\cite{allegrini2025formal}. Guarantee closure specifies a task-relative criterion for establishing required guarantees and preserving them at dependent decisions, with explicit evidence bindings, compatible assumptions, validity conditions, and consumer checks.

Classical multi-agent research studies negotiation, allocation, false-name resistance, faithful implementation, and computational trust~\cite{smith1980contractnet,jennings2001negotiation,baarslag2016opponent,yokoo2001aij,shneidman2004faithful,huynh2006fire,teacy2006travos}. Mechanism design provides foundations for analyzing incentive compatibility, individual rationality, and budget balance~\cite{myerson1981,myerson1983,nisan2007}. We examine how persuasion-sensitive reports, enforceable evaluator penalties, and dependent errors constrain task-level guarantees, and derive extraction and corruption-cost bounds under explicit assumptions.

\section{Conclusion}
\label{sec:conclusion}


We systematize decentralized agent economic infrastructure and introduce guarantee closure across the task lifecycle. Our analysis shows that correct protocol execution can still pay for nonconforming work or leave conforming work unpaid. Targeted repairs address specific dependencies but leave further requirements for closure. We also show that recorded approval alone does not establish task conformance, while economic guarantees depend on report formation, enforceable penalties, and evaluator error dependence. Secure agentic commerce requires guarantees to be established for the task and constrain dependent downstream decisions.



\label{paper:body-end}


\section*{Open Science}
We provide anonymized source code, prompts, fixed study inputs, model configurations, protocol sources, and selected result summaries. These materials support rerunning the controlled experiments and reconstructing the evaluator sample. 

\section*{LLM Usage Considerations}
LLMs were used for editorial purposes in this manuscript, and all outputs were
inspected by the authors to ensure accuracy and originality. LLMs were also used
to simulate reviewer feedback on drafts, and all resulting changes were made and
verified by the authors.

\bibliographystyle{IEEEtran}
\bibliography{refs}
\normalcolor

\appendices


\section{Formal Property Definitions}
\label{app:formal}

The property definitions fix what a guarantee establishes, the evidence it requires, and the assumptions and bounds under which it holds. The security parameter is $\lambda$, $\mathrm{negl}(\lambda)$ denotes a negligible function. PPT means probabilistic polynomial time.

\subsection{S1 Discovery}

\keypoint{P2. Capability Soundness}
\emph{Formalization.}
For identity $\mathsf{id}$ advertising capability $c$ with attestation $\pi$, P2 binds statistical performance to the serving configuration. Let $h$ identify the model, code, tools, and configuration, $I$ the validity interval, and $\mathcal D_c^v$ the versioned reference distribution. Let $\mu_h$ be the true expected task score under the fixed serving and sampling law during $I$, integrating service randomness. The bundle, claimed level $\theta$, distribution, and evaluation procedure are committed before fresh evaluation data are sampled. For tolerance $\eta\geq0$ and per-evaluation error budget $\beta\in[0,1]$, define
$V_\pi=\mathsf{Verify}(\mathsf{id},h,I,c,\theta,\mathcal D_c^v,\pi)$.

For every allowed adversary respecting this evaluation protocol, soundness requires
$\Pr[V_\pi=1\ \land\ \mu_h<\theta-\eta]
\leq\beta+\mathrm{negl}(\lambda)$.
Probability covers fresh evaluation samples and protocol randomness. The serving configuration and reference law must remain stable throughout $I$.

The service must also connect requests authenticated to $\mathsf{id}$ to the measured configuration $h$ throughout $I$, for example by remote measurement and a serving key bound to that configuration. Non-transferability means that no PPT adversary can rebind an existing attestation to a different identity or serving configuration except with negligible probability.

\subsection{S2 Negotiation}

\keypoint{P7. Collusion Resistance}
\emph{Formalization.}
Let $h$ denote the mechanism's outcome rule, let $u_i$ denote the utility of participant $i$ under the resulting outcome, and let $\sigma^*$ denote the intended strategy profile. For a declared coalition-size bound $\tau\geq1$ and gain bound $B\geq0$, the mechanism is \emph{$(\tau,B)$-collusion-resistant} if the inequality below holds for every coalition $C$ with $|C|\leq\tau$ and every coalition strategy $\sigma_C$ in the declared admissible strategy class.
$\mathbb{E}\!\left[\sum_{i\in C} u_i\!\left(h(\sigma_C,\sigma^*_{-C})\right)\right] - \mathbb{E}\!\left[\sum_{i\in C} u_i\!\left(h(\sigma^*)\right)\right] \leq B.$
The bound $B$ applies only to the stated coalition size, utility model, and strategy class. Corollary~\ref{cor:combined} adds a separate induced-error channel, so a P7 claim must identify which strategies and assumptions it covers.

\keypoint{P8. Persuasion Robustness}
\emph{Formalization.}
Let $\phi(m)$ denote the task-relevant fact projection of message $m$, and let $\beta_i(v_i,m)$ denote participant $i$'s realized report or decision given private type $v_i$ and message $m$. A market interface is \emph{persuasion robust} if
$\beta_i(v_i,m)=\beta_i(v_i,m')$
whenever
$\phi(m)=\phi(m').$
The comparison holds task-relevant facts fixed. For a general policy, P8 requires $\Delta_i=0$ relative to a fact-equivalent reference message, while baseline strategic shading may remain. In the truthful-reference subclass of Eq.~\eqref{eq:persuadable}, this requires $\delta_i(m)=0$ for every admissible message. Randomized policies require equal conditional output distributions. Corollary~\ref{cor:restore} obtains this invariance by deriving reports from canonical attested facts.

\subsection{S4 Verification}

\keypoint{P9a. Receipt Soundness}
\emph{Formalization.}
Let $\psi_T(y)=1$ mean that output $y$ satisfies task $T$. For every PPT adversary producing $(y,\pi)$, cryptographic soundness requires
$\Pr[\mathsf{Verify}(T,y,\pi)=1\land\psi_T(y)=0] \leq\mathrm{negl}(\lambda).$
The probability covers setup, adversarial, and verifier randomness \cite{groth2016}. {A statistical acceptance claim instead specifies an error bound and the task and adversary distribution over which it holds.} An economic relaxation instead states a minimum attack cost for a specified strategy class. Every P9a claim must therefore identify its strategy class, trust assumptions, and evidence process.

\keypoint{P9b. Receipt Completeness}
{For a conforming output, P9b requires a permitted honest-provider strategy that obtains accepted, task-bound evidence within the task's time and total cost bounds against every adversary admitted by $\Gamma$. The strategy, available invocation and retry actions, response assumptions, and terminal conditions form part of the claim. For deadline $\tau$, budget $B$, and availability error $\epsilon_b$, a strategy $\sigma_{\rm P}$ must satisfy
$\Pr_{\sigma_{\rm P},\mathcal A,\Gamma} [\text{matching acceptance by }\tau\ \land\ C_{\rm P}\leq B] \geq 1-\epsilon_b$
for every admitted $\mathcal A$, conditional on a conforming submission. Probability covers scheduling and mechanism randomness under the declared model. A deterministic claim sets $\epsilon_b=0$. An unsuccessful execution of one policy supplies evidence about that policy. Refuting P9b requires an admitted counterstrategy for every permitted provider strategy. A hole identifies the absent acceptance or recovery path. Client refunds, delivery of an existing receipt, and timely provider payment have separate roles under P12 and P13.}

\subsection{S5 Settlement}

\begin{table*}[t]
\centering
\begin{threeparttable}
\caption{Verification conditions for P9a and P9b.}
\label{tab:verif-families}
\scriptsize
\setlength{\tabcolsep}{4pt}
\renewcommand{\arraystretch}{1.15}
\begin{tabularx}{\textwidth}{@{}c *{3}{>{\raggedright\arraybackslash}X}@{}}
\toprule
\textbf{Family} &
\multicolumn{1}{c}{\textbf{Acceptance condition}} &
\multicolumn{1}{c}{\textbf{Soundness assumptions (P9a)}} &
\multicolumn{1}{c}{\textbf{Availability constraints (P9b)}} \\
\midrule

ZK receipt &
Proof of committed I/O relation &
Proof-system soundness and commitment binding &
Proof/verifi. meet task time/cost bounds~\cite{sun2024zkllm,qu2025zkgpt} \\
\addlinespace[2pt]

TEE &
Attested output from committed, measured code that checks task acceptance &
Hardware trust, measurement integrity, and attestation-key security &
Inference and attestation overhead; remediation may delay or block receipts~\cite{vanschaik2024sgxfail} \\
\addlinespace[2pt]

Optimistic receipt &
Assertion accepted after the window with no successful challenge &
An honest, funded, live challenger able to detect the modeled invalid claim &
Assertion, challenge, and dispute must finish within the task horizon~\cite{lee2026challenge} \\
\addlinespace[2pt]

Evaluator verdict &
Designated evaluator or committee approves &
Evaluator accuracy and independence, or an economic bound for the stated attack class &
Bounded evaluation and response; silence rules determine recovery~\cite{crapis2026erc8183,okx-app,okx-app-whitepaper} \\

\bottomrule
\end{tabularx}
\begin{tablenotes}[flushleft]
\scriptsize
\item[] P9a: receipt soundness; P9b: receipt completeness. Acceptance must match the task predicate; an honest provider must obtain it within the task's time and total cost bounds.
\end{tablenotes}
\end{threeparttable}
\end{table*}

\keypoint{P12. Cross-Domain Conditional Atomicity}
When verification and payment occur in different security domains, P12 keeps payment conditional on acceptance of the same task. Safety permits payment only after matching finalized acceptance. Under the declared delivery and finality bounds, eligible acceptance must lead to payment within the task horizon. Expiry and refund rules exclude duplicate release and conflicting terminal outcomes \cite{zamyatin2021sok}.

\keypoint{P13. Timeliness and Progress}
{P13 requires the selected workflow policy to reach the task's specified completion outcome within its horizon under the declared delivery, finality, and service assumptions. In the controlled tasks, completion means payment to a conforming provider by time 16. Acceptance acquisition, receipt delivery, and payment are measured separately. A conforming execution that satisfies the progress premises and misses that deadline refutes the selected policy's P13 guarantee. Tasks with incremental settlement additionally require payment for verified partial results. A terminal client refund is a distinct outcome.}

\subsection{Aggregation of Closure Bounds}
\label{app:closure-bounds}
Closure retains the bound attached to each active obligation. To combine probability guarantees, let $F_j$ be bad events in the same execution with justified bounds $\Pr(F_j)\leq\epsilon_j$. Adaptive calls require the corresponding bounds conditional on admissible preceding histories. For task error budget $\epsilon_{\rm task}$, the union bound gives
$\Pr\!\left[\bigcup_j F_j\right]\leq\sum_j\epsilon_j\leq\epsilon_{\rm task}.$
The bound permits dependent events. Time and economic guarantees are checked against thresholds in their respective units.

\section{Corpus Construction and Scoring}
\label{app:corpus}

Each rating retains its assessment unit, evidence source, and assumptions. The corpus rules fix these choices.

\subsection{Corpus and Scoring Rules}

We derived the properties by identifying distinct security or economic losses across the six lifecycle stages. The sources cover decentralized finance (DeFi), MEV, multi-agent systems, and agent security \cite{werner2022sok,daian2020flash,mao2026sok}, together with conditions A1 and A2 in \S\ref{sec:threatmodel}.


We use purposive selection from the comparative studies in Appendix~\ref{app:related_works}, assessing candidates available by 31 August 2026 against the lifecycle criterion using primary, version-specific sources and archived snapshots. Public research prototypes are eligible, including A402 alongside standards and platforms. Wallet-only ERC-4337 infrastructure \cite{erc4337} and closed platforms without an exposed lifecycle stage are outside scope, while brokered OKX is an explicit comparator.

For P9a, approval or task payment is assessed on whether its evidence establishes conformance. For P9b, we identify the provider's acceptance route, who controls it, and the time and resource conditions. A submission interface alone supplies no availability guarantee when an exclusive evaluator can withhold its decision. This rule gives both ACP and ERC-8183 unsupported P9b. AP2, x402, and MPP provide payment authorization rather than an acceptance service, so P9b is inapplicable to those component interfaces. Service platforms such as Fetch.ai and SingularityNET retain P9b as a task requirement. Restricted P9b requires a substantive progress mechanism under additional assumptions, such as an attested exchange, a specified challenge period, or provider delivery within a cooperative service. Table~\ref{tab:acceptance-evidence} applies the same rule to the core acceptance and evidence claims.

\subsection{Evidence Basis for the Mechanism Matrix}
\label{app:rationale}

\begin{table*}[t]
\color{black}
\centering

\begin{threeparttable}
\caption{EVIDENCE FOR ACCEPTANCE AND DECISION-RECORD RATINGS}
\label{tab:acceptance-evidence}
\scriptsize
\setlength{\tabcolsep}{4pt}
\renewcommand{\arraystretch}{1.10}
\begin{tabularx}{\textwidth}{@{}rcccX@{}}
\toprule
\multicolumn{1}{c}{\textbf{Assessed unit}} & \textbf{P9a} & \textbf{P9b} & \textbf{P16} & \multicolumn{1}{c}{\textbf{Decisive evidence and retained condition}} \\
\midrule
ERC-8004 registry & R & \outc{} & R & Validator responses and attributed feedback support provenance. Task conformance depends on the validator and linked evidence. The registry does not offer an acceptance service~\cite{erc8004}. \\
ACP task protocol & R & U & R & Agreement, evaluation, and payment records bind a task history. Quality remains evaluator-dependent, and the documented evaluation route provides no bounded alternative when the evaluator withholds approval~\cite{virtuals-acp}. \\
ERC-8183 job escrow & R & U & R & Submission and evaluator-controlled terminal transitions attribute the verdict. The evaluator supplies quality evidence, while expiry returns client funds. Provider access to acceptance remains unestablished~\cite{crapis2026erc8183}. \\
AP2 payment mandates & U & \outc{} & R & Signed mandates establish payment authority and part of the decision history. The assessed interface supplies neither a conformance test nor an acceptance service~\cite{ap2}. \\
x402 payment and receipt & U & \outc{} & R & Signed offers and transfer or delivery statements bind payment terms. Task-conformance evidence and an acceptance service are separate workflow requirements~\cite{x402-spec-v2,x402-offer-receipt}. \\
A402 attested exchange & R & R & R & Attested computation and request-bound exchange support a receipt under hardware trust. Exchange progress retains eventual-delivery and resource assumptions, while task-specific receipt deadlines and total cost bounds require separate support~\cite{li2026a402}. \\
MPP payment interface & U & \outc{} & U & Payment authorization and metering govern access and amounts. The assessed payment interface supplies no task acceptance or complete decision reconstruction~\cite{mpp-docs,httpauth-payment}. \\
OKX APP proposal & R & R & R & Escrowed exchange, dispute-window release, and signed settlement records provide conditional support under the proposed broker and dispute rules~\cite{okx-app-whitepaper}. \\
Bittensor scoring & R & R & U & Stake-weighted evaluation and periodic reward allocation retain honest-validator and participation assumptions. The record of weights does not reconstruct each task judgment~\cite{bittensor-docs,bittensor-validating}. \\
Olas service marketplace & R & R & R & Bonded service and provider delivery support conditional completion, while request and delivery records preserve part of the task history. Quality and progress retain service-participation assumptions~\cite{autonolas,olas-mech}. \\
Fetch.ai registration & U & U & U & Registration and authenticated messages identify participants. The assessed interfaces establish no deliverable acceptance service or complete evaluation history~\cite{fetchai-almanac,uagents}. \\
SingularityNET registry & U & U & U & Service discovery and payment channels establish transfer infrastructure. Acceptance quality, bounded receipt acquisition, and decision reconstruction remain application requirements~\cite{singularitynet2019}. \\
\bottomrule
\end{tabularx}
\begin{tablenotes}[flushleft]
\scriptsize
\item[] R: restricted support; U: unsupported; \outc{}: inapplicable component role. Restricted ratings retain the cited subclaim or additional trust assumptions.
\end{tablenotes}
\end{threeparttable}
\end{table*}

\keypoint{Cross-Domain Settlement.}
HTLCs condition payment on preimage revelation and require cross-domain liveness within timelocks~\cite{xu2021htlc}. Validator or proof-based bridges authenticate source state under validator-set or consensus/proof assumptions~\cite{zamyatin2021sok}. Optimistic bridges require an honest, funded watcher and a sufficient challenge window. Intent settlement validates a destination fill before repayment, retaining solver-liquidity, validation, and challenge assumptions~\cite{across-intent-lifecycle,erc7683}. For P12, each payment condition must bind finalized acceptance of the same task and result; delivery, challenge, expiry, and refund timing must fit the task horizon.

\keypoint{Registry and Task Protocols.}
ERC-8004 authenticates identities, feedback authors, validators, and validation responses, providing the provenance and reuse needed for P3. Its P3 support remains restricted because proof that feedback refers to a completed interaction is optional. The substantive correctness of the rating is a separate P9a question and requires task evidence. The owner-initiated validation interface similarly provides only restricted P9a and P16.

\keypoint{Classical Mechanisms.}
False-name-proof auctions establish full P1 and P4 under their stated restrictions on both true and admissible reported valuation types. Their P7 result is narrower. It uses $\tau=1$ and $B=0$ for one economic participant operating multiple identities \cite{yokoo2001aij,yokoo2004geb}. Faithful distributed implementation similarly establishes full P4 through ex post Nash incentives for reporting, communication, and computation. Its P7 result also uses $\tau=1$ and $B=0$, with the other participants following the prescribed strategy. These results support restricted P7 because collusion among multiple economic participants requires a separate coalition bound. Computational trust supports restricted P1 for the value of each controlled alias under the stated trust-root and path assumptions, without bounding their aggregate influence~\cite{cheng2005sybilproof}.

\begin{table*}[t]
\centering
\begin{threeparttable}
\caption{Analysis coverage in closely related work.}
\label{tab:related}
\setlength{\tabcolsep}{2pt}
\renewcommand{\arraystretch}{1.3}
\newcommand{\relatedrot}[1]{%
  \makebox[\linewidth][l]{\hspace{0.15cm}%
    \rotatebox{30}{\makebox[3cm][l]{\scriptsize\strut #1}}}%
}
\setlength{\bigstrutjot}{\dimexpr2.75cm-\baselineskip+\aboverulesep+\belowrulesep+\cmidrulewidth\relax}
\newcommand{\relatedhead}[1]{\multirow{2}[1]{*}[10pt]{#1}}
\begin{tabular*}{\textwidth}{@{\extracolsep{\fill}}>{\centering\arraybackslash}p{0.8cm} >{\raggedleft\arraybackslash}p{3.9cm} >{\centering\arraybackslash}p{0.7cm} *{14}{C{0.55cm}}@{\hspace{1.1cm}}}
\toprule
\multicolumn{1}{c}{\relatedhead{\rot{\textbf{Group}}}}
& \multicolumn{1}{c}{\relatedhead{\textbf{Work}}}
& \multicolumn{1}{c}{\relatedhead{\textbf{Type}}}
& \relatedrot{S1 Discovery}
& \relatedrot{S2 Negotiation/terms}
& \relatedrot{S3 Execution}
& \relatedrot{S4 Verification}
& \relatedrot{S5 Settlement}
& \relatedrot{S6 Dispute/accountability}
& \relatedrot{Formalized properties}
& \relatedrot{Requirement derivation}
& \relatedrot{System--property matrix}
& \relatedrot{Cross-stage dependencies}
& \relatedrot{End-to-end preservation}
& \relatedrot{Strategic incentives}
& \relatedrot{LLM robustness}
& \relatedrot{Empirical validation} \\
\cmidrule(lr){4-9}\cmidrule(lr){10-12}\cmidrule(lr){13-14}\cmidrule(lr){15-16}\cmidrule(lr){17-17}
& & & \multicolumn{6}{c}{\textbf{Task lifecycle}}
& \multicolumn{3}{c}{\textbf{Property analysis}}
& \multicolumn{2}{c}{\textbf{Composition}}
& \multicolumn{2}{c}{\textbf{Agent/econ.}}
& \multicolumn{1}{c@{\hspace{1.1cm}}}{\makebox[0.55cm][c]{\textbf{Evidence}}} \\
\midrule
\multirow{3}{*}{\rotatebox[origin=c]{90}{\shortstack{\textit{Threats}\\\textit{(3)}}}}
& Mao \textit{et al.}~\cite{mao2026sok} & SoK
   & \Cpart & \Cpart & \Cpart & \Cpart & \Cfull & \Cpart & \Cnone & \Cfull & \Cfull & \Cfull & \Cpart & \Cfull & \Cfull & \Cnone \\
& Shi \textit{et al.}~\cite{shi2025trustauthz} & SoK
   & \Cpart & \Cnone & \Cfull & \Cnone & \Cnone & \Cnone & \Cfull & \Cfull & \Cnone & \Cfull & \Cpart & \Cnone & \Cfull & \Cnone \\
& Al Masoud \textit{et al.}~\cite{almasoud2026judge} & SoK
   & \Cnone & \Cnone & \Cnone & \Cfull & \Cnone & \Cnone & \Cnone & \Cpart & \Cpart & \Cpart & \Cnone & \Cnone & \Cfull & \Cnone \\
\midrule
\multirow{8}{*}{\rotatebox[origin=c]{90}{\textit{Agent economy (8)}}}
& Zhang \textit{et al.}~\cite{zhang2026a2apay} & SoK
   & \Cfull & \Cpart & \Cpart & \Cpart & \Cfull & \Cpart & \Cnone & \Cpart & \Cfull & \Cfull & \Cpart & \Cpart & \Cpart & \Cnone \\
& Li \textit{et al.}~\cite{li2026x402attacks} & Ana
   & \Cfull & \Cpart & \Cpart & \Cnone & \Cfull & \Cpart & \Cfull & \Cfull & \Cfull & \Cfull & \Cfull & \Cpart & \Cfull & \Cfull \\
& Ling \textit{et al.}~\cite{ling2026freeriding} & Ana
   & \Cnone & \Cpart & \Cfull & \Cnone & \Cfull & \Cnone & \Cfull & \Cfull & \Cpart & \Cfull & \Cfull & \Cfull & \Cpart & \Cfull \\
& Wang \textit{et al.}~\cite{wang2026http402} & Msr
   & \Cnone & \Cnone & \Cpart & \Cnone & \Cfull & \Cnone & \Cfull & \Cfull & \Cfull & \Cfull & \Cpart & \Cpart & \Cnone & \Cfull \\
& Xiong \textit{et al.}~\cite{xiong2026erc8004} & Msr
   & \Cfull & \Cnone & \Cnone & \Cpart & \Cpart & \Cnone & \Cfull & \Cfull & \Cnone & \Cpart & \Cnone & \Cfull & \Cnone & \Cfull \\
& Gong~\cite{gong2026a2afinance} & Srv
   & \Cfull & \Cpart & \Cpart & \Cpart & \Cfull & \Cpart & \Cnone & \Cfull & \Cnone & \Cfull & \Cpart & \Cpart & \Cpart & \Cnone \\
& Hu and Rong~\cite{hu2025trust} & Srv
   & \Cfull & \Cnone & \Cpart & \Cfull & \Cpart & \Cpart & \Cnone & \Cfull & \Cpart & \Cpart & \Cpart & \Cfull & \Cfull & \Cnone \\
& Zhu \textit{et al.}~\cite{zhu2026agentnet} & Srv
   & \Cfull & \Cpart & \Cfull & \Cpart & \Cfull & \Cfull & \Cnone & \Cfull & \Cpart & \Cfull & \Cpart & \Cpart & \Cfull & \Cnone \\
\midrule
\multirow{3}{*}{\rotatebox[origin=c]{90}{\shortstack{\textit{DeAI /}\\\textit{infra. (3)}}}}
& Lui \textit{et al.}~\cite{lui2026deai}$^{\dagger}$ & SoK
   & \Cpart & \Cnone & \Cfull & \Cfull & \Cpart & \Cpart & \Cpart & \Cfull & \Cfull & \Cpart & \Cnone & \Cfull & \Cpart & \Cfull \\
& Alqithami~\cite{alqithami2026chains} & Srv
   & \Cpart & \Cpart & \Cfull & \Cpart & \Cfull & \Cpart & \Cpart & \Cfull & \Cfull & \Cfull & \Cpart & \Cfull & \Cfull & \Cnone \\
& Touloupou and Kapassa~\cite{touloupou2026review} & Srv
   & \Cpart & \Cfull & \Cfull & \Cfull & \Cfull & \Cfull & \Cnone & \Cpart & \Cfull & \Cfull & \Cpart & \Cfull & \Cpart & \Cnone \\
\midrule
\multirow{2}{*}{\rotatebox[origin=c]{90}{\shortstack{\textit{Formal}\\\textit{gaps}}}}
& Allegrini \textit{et al.}~\cite{allegrini2025formal} & Ana
   & \Cpart & \Cnone & \Cfull & \Cnone & \Cnone & \Cnone & \Cfull & \Cfull & \Cnone & \Cfull & \Cpart & \Cnone & \Cpart & \Cnone \\
& Otsuka \textit{et al.}~\cite{otsuka2026identity} & Srv
   & \Cfull & \Cnone & \Cpart & \Cnone & \Cnone & \Cpart & \Cpart & \Cfull & \Cfull & \Cfull & \Cpart & \Cpart & \Cpart & \Cnone \\
\midrule
& \textbf{This work}$^{\ddagger}$ & SoK
   & \Cfull & \Cfull & \Cfull & \Cfull & \Cfull & \Cfull
   & \Cfull & \Cfull & \Cfull & \Cfull & \Cfull & \Cfull & \Cfull & \Cfull \\
\bottomrule
\end{tabular*}
\begin{tablenotes}[flushleft]
\scriptsize
\item[-] \textbf{Symbols.} \Cfull: primary analysis; \Cpart: limited/secondary analysis; \Cnone: no substantive coverage. Qualitative analysis may be primary.
\item[-] \textbf{Scope.} Scores measure scope-relative coverage, not security, proof quality, or rank. S6 requires post-task accountability beyond identity authentication. Criteria: Appendix~\ref{app:related_works}.
\item[-] \textbf{Types.} Ana: analysis; Msr: measurement; Srv: survey; DeAI: decentralized AI; infra.: agent infrastructure; formal gaps: property formalization and standards-gap analyses.
\item[$\dagger$] Organized around the AI lifecycle.
\item[$\ddagger$] LLM coverage concerns A1/A2; all sources' preservation claims remain relative to declared tasks, properties, and assumptions.
\end{tablenotes}
\end{threeparttable}
\end{table*}

\begingroup

\let\FloatBarrier\relax
\subsection{Literature Crosswalk}
\label{app:related_works}

Table~\ref{tab:related} records analysis coverage, not achieved security or an overall ranking. Full coverage means explicit and substantive primary analysis; partial coverage means limited or secondary treatment; empty coverage means no substantive analysis. A dedicated survey chapter can count as primary coverage, and qualitative analysis or a bounded scope alone does not justify a downgrade. We apply the same rule to this work. Background mentions and undeveloped future work do not count. Concrete requirements or arguments in discussion or research-directions sections can count as analysis, but proposed evaluations do not count as empirical validation.

\section{Controlled Evaluation of Guarantee Closure}
\label{app:closure-eval}

We separate three questions in this section: whether acceptance is established, whether payment consumes it, and whether a conforming provider can still obtain it.

\keypoint{Reference contract and required dependencies.}
\label{app:reference-contract}
{The objective single-domain reference pays 10 units to sort a committed array. Its acceptance-service contract permits direct provider invocation, authenticates the agreed task and output, and returns matching acceptance at time 2 for a conforming submission at time 1. The reference assumes ideal commitment and authentication, a sound deterministic verifier, and this response bound. Authenticated delivery takes at most ten steps, and the next settlement step releases the prefunded payment by time 13, within the deadline of 16. The provider's total cost $c_{\rm P}\leq10$ includes execution, verification, fees, and delay losses. The requester's value $v_{\rm R}\geq10$ is net of its costs, and both outside options are zero.}

The required dependencies are
$\mathcal D_T={(P5,2,5),(P6,2,5),(P9a,4,5),(P9b,4,5),(P13,5,6)}$.
P5 and P6 require settlement to preserve the fixed payment, cost assumptions, and funding cap. P9a requires payment to consume task-bound acceptance. P9b requires a usable acceptance path within the time and fee bounds. P13 requires the deadline to hold through terminal recording.

\subsection{Reference-Task Obligations}
\label{app:closure-ledger}
Table~\ref{tab:closure-ledger} states the reference\textquotesingle s proof obligations. Each configuration counts an active property once, including its required preservation checks. Subjective configurations add P2 at S1 with dependency $(P2,1,2)$ and P7 at S4 with $(P7,4,5)$. Cross-domain configurations add P12 within S5.

\subsection{Tasks and Schedule}
\label{app:simulation-schedule}

The objective task sorts an integer array while preserving its length and element multiplicities. The default input is $(3,1,1,2)$, and the nonconforming output adds an element. An independent observer checks order and multiplicity. For subjective tasks, each deliverable receives a fixed conformance label and a separate evaluator verdict. Varying their agreement isolates the effect of an inaccurate verdict on settlement. Model judgment quality is evaluated separately using the HelpSteer2 answers in \S\ref{sec:measurement}.

Time is measured in logical steps, with explicit cross-domain delivery and source finality. Escrow receives 10 units at time 0, and the provider submits at time 1. At time 2, the evaluator returns a verdict or remains silent. Objective approval must pass the fixed verifier. With fallback enabled, the provider can invoke the verifier after silence or a contestable negative verdict in the same step. Acceptance is finalized at the source at time 2 under the finality assumption, while the original negative record is retained. Subjective acceptance follows the assigned verdict. Both architectures share this acceptance logic. Under the authentication assumption, only the fixed evaluator can issue subjective acceptance, so its silence blocks every provider strategy within the fixed task.

A complete source receipt reaches settlement at $2+d$, and payment is attempted at $3+d$. The delay $d$ includes transport and scheduling in either domain placement. Its default is zero and its admitted bound is 10. Source acceptance and destination availability are due by time 12, payment is due by time 16, and receipt validity includes time 16. Payment checks therefore validate the receipt or task-bound authorization at the actual payment attempt rather than only when the evidence arrives.

\subsection{Perturbations and Repair Policies}
\label{app:repair-design}

The fifteen experimental conditions include honest execution and fourteen perturbations covering a nonconforming output, false approval, rejection of conforming work, evaluator silence, substitution of the task, output, or acceptance predicate, invalid authentication, expired acceptance, missing finality, duplicate settlement, delayed delivery, permanent partition, and altered replay evidence. Each condition is evaluated under seven policies comprising the baseline, receipt checking at payment (C), source-side authorization (A), provider-callable verification (V), decision recording (E), C+A, and all repairs. Policy comparisons are matched within each condition and configuration. Comparisons among the baseline, C, A, and C+A disable V and E so that receipt consumption is isolated from acceptance recovery and replay evidence.

C validates the complete receipt at payment. It authenticates the receipt and checks its provider, task, output, acceptance predicate, producer, finality, and current validity against the payment obligation. Objective acceptance uses the fixed verifier, while subjective acceptance uses the assigned verdict.

The channel delivers the complete receipt from the verification domain to the payment domain at $2+d$, where $d$ is the delivery delay defined in \S\ref{app:simulation-schedule}. The baseline adapter checks only whether funds were delivered. A instead validates the receipt in the verification domain before issuing a task-bound payment authorization. It uses the same receipt checks as C and also requires source finality. The resulting authorization is tied to the verified task and payment and expires with the receipt at time 16.

{V invokes the fixed objective verifier after silence or a contestable negative verdict and before terminal transfer. An initial approval with a substituted binding, invalid authentication, or expired receipt bypasses this invocation rule. C and A subsequently reject that unusable receipt, leaving conforming work unpaid. Recovery for these cases requires invocation after any failed acceptance check, followed by matching replacement evidence before the deadline. The required response and total cost bounds must cover this recovery. E retains decision inputs, procedure versions, and authenticated records for replay.}

\subsection{Measurements and Eligible Cases}
\label{app:simulation-metrics}

{Wrongful payment is payment for a nonconforming output. Unbound payment is payment without matching, authenticated, finalized, and currently valid acceptance. The replay metric records successful authentication and reproduction of the declared decision procedure. The thirteen admitted conditions comprise honest execution, nonconforming output, false approval, false rejection, evaluator silence, five malformed-receipt conditions, repeated settlement, delivery at the bound, and altered replay evidence. The five receipt conditions vary task, output, predicate, authentication, or validity while retaining the provider's conforming original submission. Eleven conditions contain conforming work and enter the receipt and payment progress denominators. Missing finality and permanent partition are the two stress conditions outside this admitted set. Delay sweeps above ten likewise test the progress boundary. Failure counts require both an observed event and eligibility under the relevant premises.}

\subsection{Repair Outcomes and Remaining Evidence}
\label{app:additional-outcomes}

Objective fallback creates replacement acceptance after evaluator silence or a contestable rejection while retaining the original record. In both architectures, it reduces source and destination receipt misses from 7/11 to 5/11. The five remaining misses arise from substitution of the task, output, or acceptance predicate, invalid authentication, and expired acceptance. Conditional P12 progress succeeds for every eligible acceptance. Recovery after silence or rejection increases the number of conforming cases with usable objective acceptance from four to six. Combined repair misses payment deadlines in 5/11 objective and 7/11 subjective conforming cases per configuration. E enables replay in twelve of thirteen admitted conditions per configuration. The altered bundle is detected and rejected, while a guaranteed authenticated recovery path from the retained data remains unestablished. The remaining unestablished requirements concern net utility (P5), complete attribution (P15), evidence recovery (P16), capability (P2), coalition gain (P7), subjective conformance (P9a), and bounded acceptance strategies (P9b).

\subsection{Sensitivity of Closure}
\label{app:closure-sensitivity}

Removing verifier soundness from $\Gamma$ leaves P9a unestablished in the four objective configurations. Removing finality or the delivery bound leaves P12 unestablished in the four cross-domain configurations. P9b remains unestablished for objective repaired configurations until an acceptance strategy is justified over the admitted histories with task-compatible response and resource bounds.

\begin{table*}[!]
\centering
\caption{Evidence and preservation checks for the reference task.}
\label{tab:closure-ledger}
\scriptsize
\setlength{\tabcolsep}{4pt}
\begin{tabular}{clll}
\toprule
\textbf{Obligation} & \multicolumn{1}{c}{\textbf{Evidence / contract}} & \multicolumn{1}{c}{\textbf{Preservation check}} & \multicolumn{1}{c}{\textbf{Failure  boundary}} \\
\midrule
P5 & Price 10; $c_{\rm P}\leq10\leq v_{\rm R}$; outside options 0 & S5: transfers, cost and delivery assumptions & Extra fees/nonpayment defeat utility bounds \\
P6 & Escrow prefunded with 10 & S5: capped payout; no double settlement & Subsidies and extra rewards need accounting \\
P9a & Sound check of the committed sorting relation & S5: authenticity, validity, all task bindings & No verifier: hole; valid receipt ignored: break \\
P9b & Justified acceptance path; callable verifier & Access, cost, schedule: acceptance by 12 & Silence or terminal refund can block access \\
P13 & Bounded delivery/scheduling; valid receipt & Payment by 16; S6: matching terminal record & Eventual delivery gives no finite deadline \\
P15 & Authenticated identities and consequential acts & S6: signatures and task-specific attribution & Signatures do not establish verdict correctness \\
P16 & Committed I/O, versions, receipt, transfers & S6: provenance and deterministic replay & Replay does not establish subjective quality \\
\bottomrule
\end{tabular}
\end{table*}

The same scope rule applies to the baseline and combined policies. The baseline preserves nine assessments, leaves 38 unestablished, and has eighteen holes and three breaks. In its two single-domain verification-first configurations, P13 fails because seven eligible cases miss the deadline.

\subsection{Delivery Boundaries, Finite Inputs, and Stress Conditions}
\label{app:finite-stress}

\begingroup

\label{app:delivery-boundaries}
The delivery sweep uses
$d\in\{0,2,5,9,10,11,12,13,14,15,16,17,18\}.$
For example, $d=14$ delivers a valid receipt at time 16 but attempts payment at time 17 after its validity has ended. Delays above 10 test behavior outside the admitted delivery bound. The sweep contains 364 objective executions across two architectures, two domain placements, and seven policies.

The finite-input check enumerates all 364 arrays of length zero through five over $\{-1,0,1\}$. Crossing two architectures, two domain placements, the baseline and combined repair, and four conditions covering honest execution, nonconforming output, silence, and false approval yields 11,648 objective executions.
\endgroup

Across the enumerated sorting inputs, each fixed architecture, domain placement, policy, and perturbation retains the same outcome pattern as the input changes. The combined repair pays conforming outputs and refuses nonconforming outputs in these checks. Within the enumerated domain, the observed failures therefore arise from the acceptance and payment interfaces rather than from a particular sorting input.

Under permanent partition, all eight configurations with the combined repair obtain timely source acceptance but miss destination delivery and payment. Missing finality instead prevents timely source acceptance. Both conditions violate premises of the affected progress guarantees, so their raw outcomes are reported separately from eligible progress cases. Every configuration eventually refunds in both conditions.

\begingroup

\section{Native Protocol Workflows}
\label{app:experiments}
\label{app:instrument}
\label{app:native-escrow}
\label{app:native-experiments}

\subsection{Registry Acceptance and Application-Level Task Binding}
\label{app:native-erc8004}

We evaluate the three unmodified ERC-8004 reference registries on a local EVM~\cite{erc8004implementation}. Thirty-six test cases cover identity authorization, reputation, validation, and task-payment linkage. Eight linkage cases bind outputs from a fixed metadata record to validator responses and vary the chain, task, output, recipient, or payment link. Six include local test-currency transfers, while two use an unavailable or missing payment reference. One case is consistent and seven alter or omit at least one required link. The registry accepts all eight submissions.

\subsection{x402 Payment Order and Service Delivery}
\label{app:http-effects}

We evaluate the official x402 implementation with its payment facilitator and an unmodified payment-token contract on a local Ethereum chain. The service computes a document digest and records the completed request. Default payment, upfront payment, and upfront payment with a payer-signed request commitment start from identical chain and application states.

{Eleven conditions cover completion, signature and recipient checks, amount and expiry checks, payment reuse, two settlement races, resource substitution, service failure, and retry. Default payment injects each race after its separate authorization check. Both upfront policies inject it immediately before the transfer attempt, which precedes service execution. We observe transfer, application effect, and response delivery separately.}

\subsection{ERC-8183 Acceptance Correctness and Availability}
\label{app:native-binding}

{We evaluate the retained ERC-8183 reference escrow on a local EVM~\cite{crapis2026erc8183}. The evaluator extensions follow its submission semantics, including attribution of the submitting provider.} Each task escrows $10^6$ base units of a local test token, with zero platform and evaluator fees and an inactive deployment administrator. The same escrow is used with address-controlled, operator-only objective, and provider-callable objective evaluation.

Tasks compute SHA-256 hashes of complete ERC-20, ERC-8004, ERC-721, or ERC-8183 documents spanning 5,601 to 41,976 bytes. An independent observer computes the reference digests. Both objective configurations commit inputs before funding and authenticate submitted digests. The evaluator checks the commitments and recomputes the digest on the EVM before accepting or rejecting the result.

\subsection{ERC-8183 Deadlines and Verification Cost}
\label{app:native-costs}

Each document has seven pre-expiry cases covering ordinary completion, approval of an incorrect digest, rejection of a correct digest, silence, repeated completion, mismatched input, and replaced submission. Three additional cases test late evaluation and both transaction orders between completion and refund. Native escrow permits completion after expiry while funds remain until refund becomes terminal. Both objective evaluator extensions instead require evaluation before expiry. None of the twelve late cases per extension receives payment.

{Under Prague EVM rules, native completion uses 81,523 gas and objective invocation uses 247,870 to 1,702,640 gas. The latter covers document transmission and hashing. A bounded-cost P9b contract must additionally compare total provider expenditure with the task budget and justify inclusion before expiry.}

\subsection{Agent Execution, Task Binding, and Settlement}
\label{app:agent-execution}

The agent experiment uses local Qwen3.5 4B and Gemma 4 E4B together with GPT-5.6 Sol, MiniMax M2.5, GLM-5.3, Gemini 3.7 Flash, Kimi K2.6, and Qwen3.6 Plus. Each model runs eight tasks through local tools in OpenCode 1.17.15~\cite{opencode2026}, giving 64 agent attempts. Four tasks hash one protocol document, while four hash two documents concatenated in the specified order without a separator. Prompts provide the documents, permitted tool commands, and required final-answer structure. Each task allows one attempt, six agent steps, 180 seconds after initialization, and at most 2,048 tokens per model response. An unparseable final answer leaves the task pending until expiry refund. Incorrect answers and timeouts retain their observed outcomes.

The same local EVM hosts the official ERC-8004 registries and ERC-8183 escrow. Each captured answer is replayed under four conditions. These preserve the original evidence and output, alter the input evidence while preserving the submitted digest, replace the output digest, or withhold the operator's verdict. Each condition is tested with address-controlled and provider-callable acceptance. Separate escrow instances hold the captured answer fixed while varying the task evidence or acceptance path.

The digest-normalization rule is fixed after the local strict results and before supplementary native execution and hosted inference. It extracts one structured submission containing the required 64-digit hexadecimal digest. Surrounding prose and repeated mentions of the same digest are allowed. Multiple objects, extra fields, conflicting digests, incomplete answers, and timeouts remain unavailable. Normalization preserves the digest value, which is then compared with the independently computed digest.

\begin{table}[!]
\centering
\begin{threeparttable}
\caption{Agent submission and settlement outcomes.}
\label{tab:agent-results}
\footnotesize
\setlength{\tabcolsep}{3pt}
\renewcommand{\arraystretch}{1.10}

\begin{tabularx}{\columnwidth}{@{}>{\hsize=1.6\hsize\linewidth=\hsize\raggedleft\arraybackslash}X *{3}{>{\hsize=.8\hsize\linewidth=\hsize\centering\arraybackslash}X}@{}}
\toprule
\multicolumn{1}{c}{\multirow{2}{*}{\textbf{Model}}} & \multirow{2}{*}{\shortstack{Correct\\tool result}} & \multicolumn{2}{c}{Correct usable submission} \\
\cmidrule(lr){3-4}
 & & S & N \\
\midrule
Qwen3.5 4B & 8/8 & 0/8 & 8/8 \\
Gemma 4 E4B & 8/8 & 8/8 & 8/8 \\
GPT-5.6 Sol & 8/8 & 8/8 & 8/8 \\
MiniMax M2.5 & 8/8 & 8/8 & 8/8 \\
GLM-5.3 & 8/8 & 8/8 & 8/8 \\
Gemini 3.7 Flash & 8/8 & 8/8 & 8/8 \\
Kimi K2.6 & 8/8 & 8/8 & 8/8 \\
Qwen3.6 Plus & 8/8 & 7/8 & 8/8 \\

\end{tabularx}

\begin{tabularx}{\columnwidth}{@{}>{\hsize=2\hsize\linewidth=\hsize\raggedleft\arraybackslash}X *{4}{>{\hsize=.75\hsize\linewidth=\hsize\centering\arraybackslash}X}@{}}
\toprule
\multicolumn{1}{c}{\multirow{2}{*}{\textbf{Condition}}} & \multicolumn{2}{c}{Address-controlled} & \multicolumn{2}{c}{Provider-callable} \\
\cmidrule(lr){2-3}
\cmidrule(lr){4-5}
 & S & N & S & N \\
\midrule
Normal & 55/64 & 64/64 & 55/64 & 64/64 \\
Input evidence replaced & 55/64 & 64/64 & 0/64 & 0/64 \\
Output digest replaced & 55/64 & 64/64 & 0/64 & 0/64 \\
Operator silence & 0/64 & 0/64 & 55/64 & 64/64 \\
\bottomrule
\end{tabularx}

\begin{tablenotes}[flushleft]
\footnotesize
\item[] S/N: strict/normalized parsing of the same answers. Correct usable submissions match the reference digest. Counts: submissions out of 8 tasks per model; payments out of 64 workflows per condition and policy.
\end{tablenotes}
\end{threeparttable}
\end{table}

\begingroup

\section{Deployment Evidence and Evaluator Decisions}
\label{app:deployment-decisions}

\subsection{ERC-8004 Feedback Integrity and Task Binding}
\label{app:public_feedback_data}

Content is retrieved for 223 referenced feedback entries and parsed for 219. Among the retrieved contents, 198 match their commitments. Of these, 157 match the retrieved bytes and 41 match after declared gzip decompression. Eight of the remaining 25 match SHA-256 rather than the specified KECCAK-256. Two structured documents reference successful Base USDC payments. Both concern one identity whose registration pointer matches its content, whose Base x402 wallet declared in the registration metadata matches the payment recipient, and whose feedback issuer matches the payer.

{The task-link search checks top-level task and job identifiers, identifiers nested under task or job descriptions, and the task identifier carried by A2A metadata. All five locations are empty in the 219 parsed documents. The two payment-linked documents also lack output commitments. Payment linkage identifies the payer from the token transfer, allowing the transaction sender to be a facilitator.}

\subsection{Evaluator Agreement and Decision Availability}
\label{app:evaluator-decisions}

The HelpSteer2 validation set contains 1,038 responses to 519 prompts. For each prompt, we retain the record with the lowest SHA-256 hash of its canonical JSON, including annotations. We then select the 24 lowest hashes from each helpfulness stratum, giving 120 distinct prompts independently of model judgments. Evaluators see the request, response, and rubric, but not the human score. Six disjoint preliminary tasks fix the inference settings.

The rubric covers relevance, correctness, completeness, and satisfaction of the user’s requirements. Scores from zero to four range from failure to address the request to full satisfaction. We treat human scores of at least three as reference-eligible and apply the same cutoff to evaluator scores.

The eight local evaluators use greedy decoding at temperature zero with seed 20260906, a 16,384-token context, and a 1,024-token generation budget. Native templates disable optional reasoning where supported. Strict parsing accepts one structured integer score from zero to four, optionally enclosed in a formatting block. Extra explanation, invalid scores, truncation, and failed calls make the decision unavailable. Among 960 local first responses, 825 are usable under strict parsing and 890 after normalization.

The hosted collection adds GPT-5.6 Sol, MiniMax M2.5, GLM-5.3, Gemini 3.7 Flash, Kimi K2.6, and Qwen3.6 Plus, giving fourteen evaluators in total. Hosted models receive the same sample and rubric through a commercial inference gateway with a requested 1,024-token generation budget. Requests do not specify a seed. GPT-5.6 Sol and Kimi K2.6 use service-default temperatures; the other hosted models use temperature zero. Optional reasoning is disabled for GPT-5.6 Sol, Kimi K2.6, and Qwen3.6 Plus, while MiniMax M2.5, GLM-5.3, and Gemini 3.7 Flash use their default reasoning modes. The strict and normalized interfaces parse the same 1,680 first responses: 960 local and 720 hosted.

We repeat the first twelve requests for each model, giving 96 local and 72 hosted repeat calls. Agreement under strict parsing requires the same integer score or unavailability in both calls. All 96 local pairs agree, including 83 pairs with the same usable score and thirteen with two unavailable outcomes. Hosted agreement ranges from 7/12 to 12/12. Primary comparisons use the original first responses.

\subsection{Committee Composition and Voting Rules}
\label{app:paired-diagnostics}

Unanimity remains pending if any vote is unavailable. With all votes usable, it approves only if every member approves and otherwise rejects. Strict majority requires more than half of the full membership to vote in one direction and otherwise remains pending. We apply both voting rules to every set of one, two, three, five, or seven members from the eight local models under both interfaces. This gives 624 configurations evaluated on the same 120 requests. For normalized strict-majority committees of odd sizes, below-reference approvals range from $[19,67]$ for individual models to $[42,50]$ for seven-member committees, with medians between 44 and 45. Larger committees narrow the observed range while retaining a similar median. Because the committees share models and requests, these distributions measure sensitivity to membership within the observed local-model collection.

\label{app:evaluator-settlement}

\section{Economic Models, Statements, and Proofs}
\label{app:econ-full}
\label{app:proofs}
\begingroup

\subsection{Persuadable Report Model}
\label{sec:econ-model}
\label{app:report-context}

We model a \DAE interaction as a sequential game governed by a mechanism $M$, with $n$ participants and an adversary $\mathcal{A}$. Participant $i$ has private type $v_i$, such as a valuation or cost, and submits $b_i=\beta_i(v_i,m)$ after receiving context $m$. Classical mechanism design treats this report policy as part of the participant's strategy. Under A1, counterparty content can instead change the report produced by a fixed policy. We consider the persuadable report class
\begin{equation}
\label{eq:persuadable}
\beta_i(v_i,m)=v_i+\delta_i(m), \qquad
\delta_i(m)\in[-\delta,\delta],
\end{equation}
where $\delta\geq0$ bounds the change in the realized report and $m$ is adversary-controlled message or context content.

The model holds task-relevant facts fixed. Let $\phi(m)$ retain those facts, and require admissible messages to satisfy $\phi(m)=\phi(m_0)$ for a fact-only reference $m_0$. For Proposition~\ref{prop:ic}, the reference report is truthful, so $\delta_i(m_0)=0$. For a general policy, the message-induced change is instead measured against that policy's reference report,
$\Delta_i(v_i,m)=\beta_i(v_i,m)-\beta_i(v_i,m_0).$
The difference $\beta_i(v_i,m_0)-v_i$ represents strategic reporting already present before persuasion.

\subsection{Generative Extractable Value}
\label{sec:econ-gev}
\label{app:econ-notes}

\begin{definition}[\GEV]
\label{def:gev}
Fix an interaction with mechanism $M$ and a reference profile for all participants other than a distinguished party, and let $V(\sigma)$ denote the expected gross economic value captured by that party under strategy $\sigma$, before the cost of acquiring adversarial control. Let $\Sigma_{\mathrm{ord}}$, $\Sigma_{\mathrm{msg}}$, and $\Sigma_{\mathrm{ctx}}$ denote nested strategy classes that additionally permit control over transaction ordering, message content visible to counterparty \LLMs, and counterparty retrieval or context assembly. Assume the value supremum over each of the four strategy classes is finite. \emph{Generative Extractable Value} is
\begin{equation*}
\mathrm{GEV}
=
\sup_{\sigma\in\Sigma_{\mathrm{ctx}}}V(\sigma)
-
\sup_{\sigma\in\Sigma_{\varnothing}}V(\sigma),
\end{equation*}
where $\Sigma_{\varnothing}\subset\Sigma_{\mathrm{ord}}\subset\Sigma_{\mathrm{msg}}\subset\Sigma_{\mathrm{ctx}}$, and $\Sigma_{\varnothing}$ is the honest-participation baseline in which the distinguished party follows its protocol role, messages contain only the task-relevant fact projection, and the retrieval corpus is fixed. The increment from $\Sigma_{\varnothing}$ to $\Sigma_{\mathrm{ord}}$ corresponds to the additional value obtained through transaction ordering, as studied in MEV \cite{daian2020flash,babel2023clockwork}.
\end{definition}

\subsection{Incentive Compatibility Under Persuadable Bidders}
\label{sec:econ-ic}
\label{app:auction-context}

\begin{proposition}[Report distortion under persuasion]
\label{prop:ic}
Let $M$ be a sealed-bid second-price auction with at least two bidders, no reserve or report cap, and no withdrawal after reports are submitted. Bidder $i$ has private value $v_i$ and submits bid $b_i$. Let bidders lie in the persuadable class \eqref{eq:persuadable} with $\delta>0$. Assume the relevant type and report neighborhoods are admissible and the adversary can realize any shift profile in $[-\delta,\delta]^n$. Then
(a) No bidder can guarantee $b_i=v_i$ against this adversary class. In a realization, $b_i\neq v_i$ exactly when $\delta_i(m)\neq0$, and an observed bid implies only that the bidder's type lies within radius $\delta$.
(b) Against an adversary that knows every bidder's valuation, the allocation can incur worst-case welfare loss approaching $2\delta$, and no strategy in the class can lose more.
(c) A seller-affiliated adversary with bidder-specific message access can extract $\delta$ per interaction without observing the value profile, because shifting every report by $\delta$ raises the clearing price by exactly $\delta$ while preserving the allocation. The witnessing strategy uses no ordering action, so $\mathrm{GEV}\geq\delta$ while the auction itself executes correctly. Under zero ordering premium, the same bound applies to the marginal message increment.

Moreover, when the type space is closed under shifts of size $\delta$, a mechanism observing only submitted reports cannot distinguish a persuaded profile from truthful play by a correspondingly shifted type profile.
\end{proposition}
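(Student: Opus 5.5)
The argument is direct, using only the second-price structure and the persuadable class \eqref{eq:persuadable}. For (a), fix bidder $i$ and any report policy in the class. Because the adversary can realize any shift profile in $[-\delta,\delta]^n$, it can pick $\delta_i(m)=\delta\neq0$, which gives $b_i=v_i+\delta\neq v_i$; so no bidder can guarantee truthfulness. The equivalence ``$b_i\neq v_i$ iff $\delta_i(m)\neq0$'' is immediate from $b_i=v_i+\delta_i(m)$. An observed $b_i$ therefore certifies only $v_i=b_i-\delta_i(m)\in[b_i-\delta,b_i+\delta]$.

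For (b), first prove the upper bound. Let $w$ be the winner under the realized bids and $j^\ast$ the efficient bidder. Then $b_w\geq b_{j^\ast}$, which gives
\[
v_{j^\ast}-v_w = (b_{j^\ast}-\delta_{j^\ast})-(b_w-\delta_w)\leq \delta_w-\delta_{j^\ast}\leq 2\delta .
\]
Next, show the bound is approached. Take valuations with $v_1=v_2+2\delta-\varepsilon$ and all other values below $v_2-2\delta$. The adversary knows the profile, so it sets $\delta_1=-\delta$ and $\delta_2=+\delta$. Bidder 2 then wins, and the loss is $2\delta-\varepsilon$. The ``suitable valuations'' and admissibility hypotheses are there to guarantee that such profiles exist.

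For (c), the seller-affiliated adversary sends messages that induce $\delta_i(m)=\delta$ for every $i$. A common shift preserves the order of bids and breaks ties identically, so the allocation is unchanged. The second-highest bid also rises by exactly $\delta$, so the price rises by $\delta$, and this required no knowledge of $v$. This strategy lies in $\Sigma_{\mathrm{msg}}$ and uses no ordering control. Hence
\[
\sup_{\Sigma_{\mathrm{ctx}}}V\;\geq\;\sup_{\Sigma_{\mathrm{msg}}}V\;\geq\;\sup_{\Sigma_\varnothing}V+\delta ,
\]
where the second step uses that the witness achieves baseline value plus $\delta$ relative to the fact-only baseline, and nestedness gives monotonicity of the suprema. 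Together these give $\mathrm{GEV}\geq\delta$. If the ordering premium $\sup_{\Sigma_{\mathrm{ord}}}V-\sup_{\Sigma_\varnothing}V$ is zero, the same inequality holds for the message increment.

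For the indistinguishability claim, map a persuaded profile $(v,\delta(m))$ to the truthful profile with types $v'=v+\delta(m)$. Closure of the type space under $\delta$-shifts makes $v'$ admissible, and both profiles produce the same report vector. A mechanism that sees only reports therefore has identical input in the two cases.

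The main difficulty is bookkeeping in (c): checking that the baseline supremum is attained at, or is comparable with, the truthful fact-only play, so that the witness's $+\delta$ transfers to the suprema. I would handle this by evaluating $V$ at the baseline-optimal seller strategy composed with the common shift, which yields the inequality directly.
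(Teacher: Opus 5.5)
Your proposal is correct and follows essentially the same argument as the paper: the rearrangement $v_i\in[b_i-\delta,b_i+\delta]$ for (a) and the indistinguishability claim, the chain $v_w\ge b_w-\delta\ge b_{i^*}-\delta\ge v_{i^*}-2\delta$ together with an opposite-shift two-bidder construction for (b), and the common $+\delta$ shift for (c). Your extension of the lower-bound example to $n\ge2$ (extra bidders placed below $v_2-2\delta$) and your supremum bookkeeping for $\mathrm{GEV}$ refine the paper's steps without changing the route; the paper simply takes truthful reports as the honest baseline.
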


\keypoint{Relation to auction results}
Proposition~\ref{prop:ic}(b) derives a welfare bound directly from bounded report errors. Chiesa, Micali, and Zhu analyze a related interval-uncertainty problem with uncertain own valuations and undominated strategies \cite{chiesa2015knightian}. Our model fixes true values and allows outside messages to distort reports. Bid-taker cheating and credible-auction work study other manipulation channels \cite{rothkopf1995,rothkopf1990rare,porter2003cheating,akbarpour2020credible}. Our common-shift construction leaves the auction rule unchanged and requires the specified message-induced shift to be attainable.

\begin{IEEEproof}[Proof of Proposition~\ref{prop:ic}]
Let $M$ be the sealed-bid second-price auction. The highest report wins and pays the second-highest report, with ties resolved by a fixed rule.

\emph{(a) Truthful reporting under adversarial perturbation.}
The realized report of bidder $i$ is
$b_i=v_i+\delta_i(m).$
Whenever the adversary induces a nonzero shift, the submitted report differs from $v_i$. The bidder cannot guarantee a truthful report against a class that permits such a shift. Rearranging the report equation gives
$v_i\in[b_i-\delta,b_i+\delta].$
Under the vector-targeted assumption in Prop.~\ref{prop:ic}, every value in this interval is compatible with the report when the whole interval lies in the type space. The same report vector $b$ also arises from truthful reporting by types $v'=b$. A report-only mechanism cannot distinguish the two cases when the relevant type neighborhoods are admissible.

\emph{(b) Welfare loss.}
Let $n=2$ with
$v_1=v, v_2=v-\epsilon,$
where $0<\epsilon<2\delta$. The efficient outcome allocates the item to bidder 1. Choose shifts
$(\delta_1,\delta_2)=(-\delta,+\delta')$
with $\delta'\leq\delta$ and $\delta'>\epsilon-\delta$. Then
$b_2=v-\epsilon+\delta' > v-\delta=b_1,$
so bidder 2 wins and welfare falls by $\epsilon$. Letting $\epsilon\uparrow2\delta$ gives a loss approaching $2\delta$.

For tightness, let $i^*$ be the efficient bidder and $w$ the realized winner. Because $b_w\geq b_{i^*}$ and every report differs from its value by at most $\delta$,
$v_w \geq b_w-\delta \geq b_{i^*}-\delta \geq v_{i^*}-2\delta.$
Thus no realization in the class loses more than $2\delta$.

\emph{(c) Message-channel extraction.}
Take the seller as the distinguished party, $V$ as auction revenue, and truthful reports as the honest baseline. A seller-affiliated adversary can apply the common shift
$d=(+\delta,\dots,+\delta)$
without observing the value profile. Adding the same constant to every report preserves the report ordering and therefore preserves the allocation. The second-highest report increases by exactly $\delta$, so the clearing price increases by $\delta$. The seller gains $\delta$ while the witnessing strategy uses no ordering action. By Def.~\ref{def:gev},
$\mathrm{GEV}\geq\delta.$

\end{IEEEproof}

\begin{corollary}[Restoration by content-minimized bidding]
\label{cor:restore}
Suppose verifiable bid derivation has the form $b_i=f_i(v_i,\tau)$ for a deterministic policy $f_i$ committed before interaction. Require every pair of fact-equivalent messages to yield the same canonical, attested context $\tau$, and require $v_i$ to be fixed before and independently of the adversary's message. Then $\Delta_i=0$, establishing P8 while allowing baseline strategic shading. In the private-value second-price auction $M$ of Prop.~\ref{prop:ic}, where utility equals the value of the allocated item minus payment, for each fixed profile of rival reports, truthful $f_i(v,\tau)=v$ yields utility at least as high as any other admissible report policy. When all bidders choose this policy, bids and outcomes coincide with truthful execution of $M$.
\end{corollary}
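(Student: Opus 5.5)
The argument splits the statement into three claims and proves them in order: invariance ($\Delta_i=0$, hence P8), pointwise dominance of the truthful policy, and coincidence of outcomes with truthful execution of $M$.

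\textbf{Invariance.} Fix bidder $i$ and a message $m$ admissible in the sense of the persuadable report model, so $\phi(m)=\phi(m_0)$. By hypothesis, fact-equivalent messages yield the same canonical attested context, so $\tau(m)=\tau(m_0)$. Also by hypothesis, $v_i$ is fixed before and independently of the message, so the same $v_i$ enters both evaluations. Since $f_i$ is deterministic and committed before interaction, it cannot vary with $m$ once $(v_i,\tau)$ is fixed. Hence
\[
\Delta_i(v_i,m)=f_i(v_i,\tau(m))-f_i(v_i,\tau(m_0))=0 .
\]
This is exactly the P8 condition of Appendix~\ref{app:formal} relative to the fact-equivalent reference. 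Because $f_i(v_i,\tau(m_0))-v_i$ is unrestricted, baseline strategic shading remains permitted. The point to state carefully is that determinism and precommitment exclude any dependence of $f_i$ on $m$ except through $\tau$. The attestation assumption is what makes the realized $\tau$ equal to the canonical one.

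\textbf{Dominance.} Fix rival reports $b_{-i}$, let $p=\max_{j\neq i}b_j$, and fix the tie rule of $M$. Under truthful bidding $b_i=v$, bidder $i$ wins exactly when $v>p$ (up to ties), paying $p$. Its utility is $\max\{v-p,0\}$, and ties give utility $0$ either way. Any other admissible policy produces a report $b\neq v$ and yields utility $v-p$ if $b$ wins and $0$ otherwise. This is at most $\max\{v-p,0\}$ by the standard second-price case split:
\begin{itemize}
\item If $b$ wins while $v<p$, the utility is negative.
\item If $b$ loses while $v>p$, the utility $0$ forgoes the positive surplus $v-p$.
\item In every other case the allocation and payment coincide with truthful bidding.
\end{itemize}
Since the bound holds pointwise for each fixed $b_{-i}$ and $\tau$, the statement holds for every rival profile.

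\textbf{Outcomes.} If every bidder uses $f_i(v,\tau)=v$, the invariance step gives $b_i=v_i$ for every admissible message. The report vector therefore equals the true value vector, and $M$'s allocation and payment are those of truthful execution.

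The main care point is not the auction argument but the precise reduction in the first step. I would state explicitly that $\tau$ is a function of $\phi(m)$ alone, which is the validated-mapping assumption discussed before Corollary~\ref{cor:restore}. I would also note that the conclusion does not extend to messages that change the task-fact projection.
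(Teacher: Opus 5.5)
Your proposal is correct and follows the paper's proof essentially step for step. You obtain $\Delta_i=0$ from determinism, precommitment, and the requirement that fact-equivalent messages yield the same $\tau$; you then use weak dominance of truthful bidding for fixed rival reports, and coincidence with truthful execution when every bidder uses $f_i(v,\tau)=v$, with the only difference being that you carry out the standard second-price case split (including ties) explicitly where the paper cites the classical result.
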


\begin{IEEEproof}[Proof of Corollary~\ref{cor:restore}]
Fix the committed deterministic policy $f_i$ and private value $v_i$. Fact-equivalent messages produce the same canonical attested context $\tau$, so $f_i(v_i,\tau)$ is unchanged. This proves $\Delta_i=0$ and P8 for every committed policy in the stated class, without requiring that policy to report truthfully.

Under the stated private-value, quasilinear second-price model, reporting $v_i$ weakly dominates any alternative for every fixed profile of rival reports \cite{myerson1981,nisan2007}. The truthful policy therefore yields at least as much utility as any other admissible policy against that profile. If all bidders choose it, bids equal true values and the usual allocation and payment follow.
\end{IEEEproof}

\subsection{Collusion Cost of Bonded Evaluator Committees}
\label{sec:econ-eval}

\begin{proposition}[Evaluator corruption cost]
\label{prop:evaluator}
Consider escrow of value $p$ released on evaluator approval. A wrongful release yields provider gain $g\leq p$.

(a) \emph{Interested-party baseline.} Consider a single unbonded evaluator selected by an interested party, as permitted by the ERC-8183 core design and by an ACP configuration with one requester-appointed evaluator and no additional bond \cite{crapis2026erc8183,virtuals-acp}. If its modeled exposure is its expected future participation value $r$, compensation of at most $r$ offsets a false approval. For new identities with $r\rightarrow0$, this corruption cost does not scale with $p$. When the evaluator and client are the same party, wrongful rejection requires no external bribe.

(b) \emph{Bonded random committee.} Select $n$ evaluators using a verifiable random function (VRF) \cite{micali1999vrf} from identities that have posted bond $b$, with each selected bond committed separately for the job. Release requires $t$ approvals, where $1\leq t\leq n$. Honest members reject nonconforming work in this attack class and retain their bonds. False approval brings no additional protocol reward or cost saving. Let $q\in(0,1]$ uniformly lower-bound the probability that each false approving vote incurs an enforced net bond loss of $b$ within the job horizon, regardless of quorum formation. Net penalties leave the attacking coalition and cannot be recovered through its identities. For any attack $A$ through bribed or adversary-owned seats, with wrongful-release event $W$,
$\mathbb E[\operatorname{cost}(A)]\geq tqb\,\Pr[W\mid A].$
Thus $b>g/(tq)$ makes every such attack with positive success probability unprofitable. Adversary-owned identities must be funded before selection. Evaluators maximize undiscounted expected monetary payoff, and promised bribes are enforceable.
\end{proposition}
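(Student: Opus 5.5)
Part (a) follows directly from the modeled utility, and part (b) comes down to counting false approvals on the event of wrongful release.

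\emph{Part (a).} The single unbonded evaluator's only modeled stake in honest behavior is its expected future participation value $r$. Approving nonconforming work therefore costs it at most $r$, so a side payment of $r$, or $r$ plus an arbitrarily small $\epsilon$ to break ties, makes false approval weakly preferred. Nothing in this cost depends on the escrow value $p$. Taking $r\to 0$ for fresh identities shows that the corruption cost does not scale with $p$. If the evaluator and the client are the same party, a wrongful rejection returns the escrow to the client through refund or expiry. It therefore needs no external transfer, and its bribe cost is zero.

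\emph{Part (b).} Fix an attack $A$ through bribed or adversary-owned seats. The cost of $A$ must include the enforced net bond losses borne by the coalition, either directly on owned seats or as compensation promised to bribed seats. An evaluator maximizing expected monetary payoff will not cast a false vote unless it is paid at least its expected loss. On the event $W$, release requires $t$ approvals. By assumption, honest members reject nonconforming work in this attack class, so all $t$ approvals are false votes cast by coalition seats.

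Let $N$ be the number of false approving votes. Then $N\ge t\,\mathbf 1_W$. Each false vote incurs loss $b$ with probability at least $q$, and this lower bound holds regardless of quorum formation. Summing over votes and conditioning on the vote profile gives $\mathbb E[\text{penalty}]\ge qb\,\mathbb E[N]\ge tqb\,\Pr[W\mid A]$. Penalties leave the coalition and cannot be recycled through its identities. False votes bring no offsetting reward or saving. Together these give $\mathbb E[\operatorname{cost}(A)]\ge tqb\,\Pr[W\mid A]$.

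For deterrence, the attack's expected gain is at most $g\Pr[W\mid A]$. If $b>g/(tq)$ and $\Pr[W\mid A]>0$, the expected cost strictly exceeds the expected gain, so the attack is unprofitable.

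\emph{Main obstacle.} The hardest step is justifying the per-vote lower bound $q$ uniformly over events where the quorum may or may not form. Expectations must be taken over the VRF seat selection and the adversary's adaptive choices. I would handle this by a linearity-of-expectation argument over indicator variables for each selected false vote. In that argument I apply the uniform $q$ conditional on every admissible history, which the statement grants ``regardless of quorum formation.'' I then use enforceability of promised bribes so that bribed seats' losses are charged to the attacker.
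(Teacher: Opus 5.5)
Your proposal is correct and follows the paper's proof of (b) essentially step for step. You count the false approving votes $N$, charge each one an expected loss of at least $qb$ to the attacker (directly on owned seats, or through enforceable compensation to bribed seats) to get $\mathbb E[\operatorname{cost}(A)]\geq qb\,\mathbb E[N]$, and then use $N\geq t\mathbf 1_W$ and the profit bound $(g-tqb)\Pr[W\mid A]$. The paper gives no separate argument for (a), treating it as immediate from the modeled exposure $r$, which is what your short paragraph does.
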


\begin{IEEEproof}[Proof of Proposition~\ref{prop:evaluator}]
We use expected monetary payoffs and enforceable side payments.

\emph{(b) Bonded random committee.}
For a fixed attack $A$, let $N$ count false approving votes. Each vote has expected net bond loss at least $qb$. Honest rejection retains the bond, and false approval brings no extra protocol reward or cost saving. A bribed evaluator therefore needs compensation for this loss, while an adversary-owned seat bears it directly. Adding the costs over seats gives
$\mathbb E[\operatorname{cost}(A)]\geq qb\,\mathbb E[N].$
Wrongful release needs $N\geq t$, so $N\geq t\mathbf1_W$, where $\mathbf1_W$ indicates wrongful release. Hence
$\mathbb E[\operatorname{cost}(A)]\geq tqb\,\Pr[W\mid A].$
The attacker's expected profit is at most
$(g-tqb)\Pr[W\mid A],$
which is negative when $b>g/(tq)$ and success has positive probability. This argument permits correlated votes and randomized attack initiation.
\end{IEEEproof}

\subsection{The Correlation Floor on Committee Soundness}
\label{sec:econ-corr}
\label{app:correlation-proof}

\begin{proposition}[Correlation floor]
\label{prop:correlated}
Let $n$ evaluators vote by majority on a deliverable, with ties counted as error. For a fixed conformance label and a declared distribution of tasks and attacks, each evaluator has marginal error probability $0<e<1/2$ and every pair has error correlation $\rho\in[0,1)$. To interpret this as a P9a false-approval bound, use the distribution conditional on nonconforming deliverables.

(a) The effective committee size $n_{\mathrm{eff}}$ and the variance $\sigma^2$ of the fraction of erroneous votes are
$n_{\mathrm{eff}}=\frac{n}{1+(n-1)\rho}, \qquad \sigma^2=\frac{e(1-e)}{n_{\mathrm{eff}}}.$
The majority-error probability is at most
$\frac{\sigma^2}{\sigma^2+(1/2-e)^2}$
by Cantelli's inequality. This upper bound approaches a nonzero limit whenever $\rho>0$.

(b) There exist committees with the specified marginal error and pairwise correlation whose majority-error probability is at least $\rho e$ for every $n$. Consequently, committee size alone cannot certify an error probability below $\rho e$ over this class.

\end{proposition}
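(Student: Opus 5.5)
For part (a), I will work with the error indicators $X_1,\dots,X_n\in\{0,1\}$, where $X_j=1$ when evaluator $j$ votes wrongly on the fixed conformance label. Under the declared distribution, $\mathbb E[X_j]=e$, $\operatorname{Var}(X_j)=e(1-e)$, and $\operatorname{Cov}(X_j,X_k)=\rho\,e(1-e)$ for $j\neq k$. Let $\bar X=\frac1n\sum_j X_j$. Expanding the variance of a sum gives
\[
\operatorname{Var}(\bar X)=\frac{1}{n^2}\bigl[n\,e(1-e)+n(n-1)\rho\,e(1-e)\bigr]=\frac{e(1-e)\,(1+(n-1)\rho)}{n},
\]
and this equals $e(1-e)/n_{\rm eff}=\sigma^2$ by definition of $n_{\rm eff}$.

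Counting ties as errors, majority error is the event $\bar X\geq 1/2$, which is $\bar X-e\geq 1/2-e$ with $1/2-e>0$. The one-sided Chebyshev (Cantelli) inequality $\Pr[Z-\mathbb E Z\geq a]\leq \operatorname{Var}(Z)/(\operatorname{Var}(Z)+a^2)$ then gives the stated bound. For the limit statement, note that as $n\to\infty$ we have $n_{\rm eff}\to 1/\rho$. Hence $\sigma^2\to\rho e(1-e)>0$, and the bound tends to $\rho e(1-e)/[\rho e(1-e)+(1/2-e)^2]$, which is nonzero. This part is routine.

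For part (b), I will use a mixture (common-shock) construction. Draw a latent $Z\sim\mathrm{Bernoulli}(\rho)$. If $Z=1$, every evaluator copies a single shared Bernoulli$(e)$ draw $S$. If $Z=0$, evaluators err independently, each with probability $e$. Each marginal then equals $\rho e+(1-\rho)e=e$. For a pair, $\Pr[X_j=X_k=1]=\rho e+(1-\rho)e^2$, so the covariance is $\rho e-\rho e^2=\rho e(1-e)$ and the correlation is exactly $\rho$. On the event $\{Z=1,S=1\}$, which has probability $\rho e$, all $n$ votes are erroneous, so the majority errs for every $n$. Therefore the majority-error probability is at least $\rho e$.

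The consequence follows because this committee belongs to the class for every $n$. No bound that depends only on $(n,e,\rho)$ can certify an error below $\rho e$. For the P9a interpretation, I will apply the same construction to the distribution conditional on nonconforming deliverables, where an error is a false approval.

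The main obstacle I expect is not the computation but checking that the construction is admissible in the stated class. Specifically, it must be jointly consistent for all $n$ with the exact pairwise correlation $\rho\in[0,1)$, including the edge case $\rho=0$ where the bound is trivially true. I will also note explicitly that the mixture is exchangeable, so the pairwise correlation is identical for every pair.
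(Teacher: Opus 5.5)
Your proposal is correct and follows the paper's proof essentially step for step. For (a) you use the same variance expansion of $\bar X$ and apply Cantelli's inequality at $a=1/2-e$. For (b) you use the same Bernoulli$(\rho)$ common-source mixture and the event $\{Z=1,C=1\}$ of probability $\rho e$. You also derive two facts the paper only asserts: the limit $\sigma^2\to\rho e(1-e)$ and the pairwise covariance $\rho e(1-e)$ of the mixture.
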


\begin{IEEEproof}[Proof of Proposition~\ref{prop:correlated}]
Let $X_1,\dots,X_n\in\{0,1\}$ indicate evaluator errors. Ties count as majority error. The common marginal error and pairwise correlation satisfy
$0<\Pr[X_i=1]=e<\frac12$
and
$\mathrm{Corr}(X_i,X_j)=\rho$
for $i\neq j$. Define the total and fraction of erroneous votes by
$S=\sum_iX_i,\qquad \bar X=\frac{S}{n}.$

\emph{(a) Correlation-aware upper bound.}
Each $X_i$ has variance $e(1-e)$ and each distinct pair has covariance $\rho e(1-e)$. Therefore
\begin{align*}
\mathrm{Var}(\bar X)
&=
\frac{1}{n^2}
\left(
n e(1-e)+n(n-1)\rho e(1-e)
\right) \nonumber\\
&=
\frac{e(1-e)(1+(n-1)\rho)}{n}
=
\frac{e(1-e)}{n_{\mathrm{eff}}},
\end{align*}
where
$n_{\mathrm{eff}} = \frac{n}{1+(n-1)\rho}.$
The majority-error bound below specializes Ladha's Proposition~1, Eq.~(2), to equal error probabilities, with its Cantelli proof on p.~633 \cite{ladha1992}. Replacing correctness indicators by $1-X_i$ gives mean correctness $1-e$ and the same variance. Since the majority fails when $\bar X\geq1/2$, Cantelli's inequality yields
$\Pr[\bar X\geq1/2] \leq \frac{\sigma^2} {\sigma^2+(1/2-e)^2}, \qquad \sigma^2=\frac{e(1-e)}{n_{\mathrm{eff}}}.$

\emph{(b) Common-source lower bound.}
Draw
$Z\sim\mathrm{Bernoulli}(\rho).$
When $Z=1$, draw one
$C\sim\mathrm{Bernoulli}(e)$
and set $X_i=C$ for every evaluator. When $Z=0$, draw the $X_i$ independently as Bernoulli$(e)$. The marginal error probability remains $e$, and
$\mathrm{Corr}(X_i,X_j)=\rho.$
On event $\{Z=1,C=1\}$, which has probability $\rho e$, every evaluator errs. Hence
$\Pr[\text{majority error}] \geq \rho e$
for every $n$.

\end{IEEEproof}

\begin{corollary}[Combined corruption limit]
\label{cor:combined}
Let $W$ denote wrongful release within one job. For attack $A$ with $\Pr[W\mid A]>0$, define its expected cost per unit of success probability by
$K(A)=\frac{\mathbb{E}[\operatorname{cost}(A)]}{\Pr[W\mid A]}.$
Cost counts the attacker's undiscounted net payment obligations incurred for the job, regardless of payment date, together with resource expenditure and losses on adversary-owned bonds within the job horizon. Let $C_{\mathrm{dir}}$ and $C_{\mathrm{err}}$ be the infima of $K(A)$ over pure purchased-seat and pure induced-error strategy classes. Proposition~\ref{prop:evaluator}(b) gives $K(A)\geq tqb$ for every direct attack with positive success probability, hence $C_{\mathrm{dir}}\geq tqb$. The infimum over the two pure channels is
$C_{\mathrm{pure}} = \min\{C_{\mathrm{dir}},C_{\mathrm{err}}\} \geq \min\{tqb,C_{\mathrm{err}}\}.$
The induced-error channel also gives $C_{\mathrm{pure}}\leq C_{\mathrm{err}}$. When $C_{\mathrm{err}}<\infty$, increasing the bond cannot certify a pure-channel attack cost above $C_{\mathrm{err}}$ without also raising the cost of induced common-source failure. For attacks that combine seat purchase and induced errors, the corresponding bound depends on a joint model of both actions and the evidence supporting penalties.
\end{corollary}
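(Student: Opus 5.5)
The plan is to work directly from the definition of $K(A)$ and the two infima; no new probabilistic argument is required. First fix the strategy classes. Let $\mathcal S_{\rm dir}$ be the pure purchased-seat attacks and $\mathcal S_{\rm err}$ the pure induced-error attacks, each restricted to attacks with $\Pr[W\mid A]>0$ so that $K(A)$ is well defined. I would take $C_{\rm pure}$ to be the infimum of $K$ over $\mathcal S_{\rm dir}\cup\mathcal S_{\rm err}$. An infimum over a union of sets equals the minimum of the infima over the sets, so $C_{\rm pure}=\min\{C_{\rm dir},C_{\rm err}\}$ holds by definition. If either class is empty, its infimum is $+\infty$, and the identity still holds in the extended reals.

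Next, the lower bound on $C_{\rm dir}$. Take any $A\in\mathcal S_{\rm dir}$. Proposition~\ref{prop:evaluator}(b) gives $\mathbb E[\operatorname{cost}(A)]\geq tqb\,\Pr[W\mid A]$, and dividing by the positive quantity $\Pr[W\mid A]$ yields $K(A)\geq tqb$. Before invoking the proposition, I would check that the cost accounting in the corollary matches the one used there: net payment obligations for the job, resource expenditure, and losses on adversary-owned bonds within the horizon. The proposition lower-bounds bribe compensation and bond losses only, so counting extra expenditures can only increase cost, and the inequality survives. Taking the infimum over $\mathcal S_{\rm dir}$ gives $C_{\rm dir}\geq tqb$. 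Since $\min$ is monotone in each argument, $\min\{C_{\rm dir},C_{\rm err}\}\geq\min\{tqb,C_{\rm err}\}$.

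The upper bound $C_{\rm pure}\leq C_{\rm err}$ is immediate because a minimum never exceeds either of its entries. For the interpretive claim, suppose $C_{\rm err}<\infty$ and the bond $b$ enters only the direct channel. That is, raising $b$ leaves the set $\mathcal S_{\rm err}$ and each value $K(A)$ for $A\in\mathcal S_{\rm err}$ unchanged. Then $C_{\rm err}$ does not depend on $b$, and $C_{\rm pure}\leq C_{\rm err}$ for every $b$. So no bond size can certify a pure-channel cost above $C_{\rm err}$ unless it also changes the cost of induced common-source failure.

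The main obstacle is this independence assumption. Nothing in the corollary's wording forces induced-error costs to be bond-invariant: if induced errors also produce detectable false votes, those voters forfeit bonds and $C_{\rm err}$ rises with $b$. I would state the claim conditionally, exactly as the corollary's phrasing ``without also raising'' suggests, and note that Proposition~\ref{prop:correlated}(b) makes $\mathcal S_{\rm err}$ nontrivial. Its common-source construction succeeds with probability at least $\rho e$ under a bounded inducement cost, which gives a finite $C_{\rm err}$ under that cost model. Mixed attacks fall outside both pure classes, so I would say explicitly that nothing about them is claimed.
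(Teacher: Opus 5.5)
Your proposal is correct and follows the paper's own argument, which is written directly into the corollary statement. You divide the bound of Proposition~\ref{prop:evaluator}(b) by $\Pr[W\mid A]$, take infima, and use that an infimum over a union is the minimum of the two infima and never exceeds either one. One small correction to your aside: bonds forfeited by honest-but-mistaken evaluators are not attacker costs under the corollary's accounting, which counts only the attacker's payments, expenditures, and adversary-owned bonds. That forfeiture therefore would not by itself make $C_{\rm err}$ grow with $b$. Any dependence on $b$ would have to enter through the cost of inducing the shared error itself.
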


\endgroup

\raggedbottom
\endgroup
\endgroup
\endgroup

\end{document}